\newtheorem{thm}{Theorem}[section]
\newtheorem{lemma}{Lemma}[section]
\newcommand{\biggg}{\bBigg@{3}}
\begin{document}
% 标题
\title{A Bilateral Game Approach for Task Outsourcing in Multi-access Edge Computing}
% 作者
\author{Zheng Xiao*, \emph{Member}, \emph{IEEE},
        Dan He,
        Yu Chen,
        Anthony Theodore Chronopoulos, \emph{Senior member}, \emph{IEEE},
        Schahram Dustdar, \emph{Fellow}, \emph{IEEE},
        and Jiayi Du, \emph{Member}, \emph{IEEE}
        % <-this % stops a space
\IEEEcompsocitemizethanks{
\IEEEcompsocthanksitem *Corresponding Author
\IEEEcompsocthanksitem Zheng Xiao, Dan He, Yu Chen and Jiayi Du are with College of Computer Science and Electronic Engineering, Hunan University, Hunan, China, 410082. E-mail: {\{zxiao, danhe\}@hnu.edu.cn, cy947205926@163.com, maxdujiayi@hnu.edu.cn,}.
\IEEEcompsocthanksitem Anthony Theodore Chronopoulos is with Department of Computer Science, University of Texas, San Antonio, TX, USA 78249 and Dept Computer Engineering Informatics 26500 Rio, University of Patras, Greece. Email: antony.tc@gmail.com.
\IEEEcompsocthanksitem Schahram Dustdar is with Distributed Systems Group, Vienna University of Technology, Vienna, Austria. Email: dustdar@dsg.tuwien.ac.at.
}
}

% The paper headers
\markboth{}%
{Shell \MakeLowercase{\textit{et al.}}: Bare Demo of IEEEtran.cls for Computer Society Journals}
% 摘要
\IEEEtitleabstractindextext{%

\begin{abstract}
\justifying
    Multi-access edge computing (MEC) is a promising architecture to provide low-latency applications for future Internet of Things (IoT)-based network systems.
    Together with the increasing scholarly attention on task offloading, the problem of edge servers' resource allocation has been widely studied.
    Most of previous works focus on a single edge server (ES) serving multiple terminal entities (TEs), which restricts their access to sufficient resources.
    In this paper, we consider a MEC resource transaction market with multiple ESs and multiple TEs, which are interdependent and mutually influence each other.
    However, this many-to-many interaction requires resolving several problems, including task allocation, TEs' selection on ESs and conflicting interests of both parties.
    Game theory can be used as an effective tool to realize the interests of two or more conflicting individuals in the trading market.
    Therefore, we propose a bilateral game framework among multiple ESs and multiple TEs by modeling the task outsourcing problem as two noncooperative games: the supplier and customer side games. 
    In the first game, the supply function bidding mechanism is employed to model the ESs' profit maximization problem.
    The ESs submit their bids to the scheduler, where the computing service price is computed and sent to the TEs.
    While in the second game, TEs determine the optimal demand profiles according to ESs' bids to maximize their payoff.
    The existence and uniqueness of the Nash equilibrium in the aforementioned games are proved.
    A distributed task outsourcing algorithm (\textbf{\emph{DTOA}}) is designed to determine the equilibrium.
    Simulation results have demonstrated the superior performance of \textbf{\emph{DTOA}} in increasing the ESs' profit and TEs' payoff, as well as flattening the peak and off-peak load.
\end{abstract}
% 关键字
% Note that keywords are not normally used for preview papers.
\begin{IEEEkeywords}
    Multi-access edge computing (MEC), Internet of Things (IoT), Task outsourcing, Bidding mechanism, Noncooperative game, Nash equilibrium.
\end{IEEEkeywords}}

% make the title area
\maketitle
\IEEEdisplaynontitleabstractindextext
\IEEEpeerreviewmaketitle
\IEEEraisesectionheading{\section{Introduction}}
% 正文
% IoT->MEC诞生
The Internet of Things (IoT) is a system of interrelated tens of billions of resource-hungry terminal entities (TEs), such as, sensors, wearable devices and unmanned aerial vehicles, which transfer data over a network with little or no human intervention.
With the development of TEs and wireless networks, the demand for low-latency computing services has been growing exponentially.
This paves the way for the development of Multi-access edge computing (MEC).

% MEC的概念、edge server的工作原理
% The idea behind the concept of MEC is to pull the remote cloud closer to TEs, which is to move the computing services to network edges.
Multi-access edge computing (MEC) enables a powerful cloud at the edge of the network. MEC decentralizes networks and allows any enterprise or mobile operator to place a cloud at the edge, adjacent to the user.
In the MEC paradigm, plenty of machines are placed at the edge of the network so that computing services can be deployed on them for fast execution~\cite{Shuiguang2020Optimal}.
The MEC locates edge servers (ESs) with limited storage and computing resources at the edge of networks.
Since the computation capabilities and battery lives of TEs are limited, TEs offload computationally intensive tasks (e.g., program execution) to ESs (e.g., 4G/5G base stations).
The ESs execute these offloaded tasks and return results to TEs.
However, to take full advantage of these available computational resources of ESs, TEs' tasks need to be allocated appropriately~\cite{Cosmin2019Decentralized} .

% 引出offloading
Due to the resource constraints of ESs, some computationally intensive tasks will be offloaded to cloud servers (CSs), which are normally distantly located.
In that case, a higher transmission latency may be generated, which seriously degrades the quality of service (QoS).
Moreover, task offloading from ESs to CSs incurs extra latency and energy consumption due to communication between TEs and CSs.
Some important problems to be solved are how to satisfy the latency requirements of TEs and reduce the energy consumption in MEC.
In order to meet a strict QoS, Nafiseh \emph{et al.}~\cite{Nafiseh2019QoS} proposed a two-sided matching mechanism for edge services considering QoS requirements in terms of service response time.

% offloading->实时性
In order to achieve low-latency and energy-saving computations, several studies on task offloading in MEC have been proposed. Xinchen Lyu \emph{et al.}~\cite{lyu2018selective} have attempted to encapsulate the latency requirements in offloading tasks and designed a selective offloading scheme. This scheme is achieved by enabling the devices to be self-denied or self-nominated for offloading. This can save energy consumption and minimize delays for task offloading. 
Authors in~\cite{zhao2015cooperative} developed a threshold-based strategy to improve the QoS, which combines the advantages of ESs' with lower latency and abundant computational resources of CSs. A priority queue is also applied to solve the delay problem, wherein delay-sensitive tasks are executed ahead of delay-tolerant tasks. 

% 多个
However, most previous studies examined a single ES serving multiple TEs.
In fact, the proliferation of applications puts a heavy load on the ESs.
Since the ES has limited computing capacity and can't accommodate enough tasks, some task processes may have to wait longer to wake up.
% Rather than mitigating the delay problem, this reduces real-time performance and will become a bottleneck in the future MEC. 
Thus, for a single ES is hard to cope with concurrent tasks of multiple TEs, which will hinder the development and popularity of MEC.
% It is quite difficult to meet the time and cost requirements of varieties of applications, which will hinder the development and popularity of MEC.
In the future MEC market, there will be multiple different ESs offering optional computing service to TEs. Hence, TEs can choose different ESs according to their real-time and cost requirements.
In that case, multiple ESs provide computing services in parallel, which can accelerate the speed of task processing and alleviate offloading delays.

% 过渡段：谈论我们的工作和重要性
% In this article, we consider the delay-tolerant applications, which are executed in the local ESs without offloading to distant cloud servers.
% ### In order to enhance the competitiveness, ESs need to deal with the dynamic nature of TEs' demands.
% A spike in demand may cause a response delay and higher task rejection rate by TEs due to congested resources.
% Since each ES has limited computational resources, the dynamic demand of its TEs may not be met during spikes in demands.

% To overcome problem of resources scarcity of each ES, 
Task outsourcing has been employed as an effective paradigm by accommodating as many on-demand tasks as possible.
Its principle is to distribute TEs' tasks in different time slots according to ESs' load of each time slot.
Our work focuses on the problem of offloaded tasks outsourcing.
It differs from most existing works which focus on the necessity of offloading or on selecting which tasks to be offloaded to ESs or CSs.
The TEs' offloaded tasks are mapped to ESs according to their resource capacities.
Thus, task outsourcing can enhance scalability of MEC and satisfy TEs' dynamic service demands.

% 图->竞争市场
Fig.~\ref{fig:IoT} describes a MEC communication network in IoT, where ESs are deployed densely near TEs.
The MEC network is similar to a real competitive market, in which a wide range of TEs can be grouped into virtual clusters and compete for the ESs' limited wireless resources. 
They are interdependent and mutually influence each other.
Several base stations with ESs also compete with each other to win more TEs.
The ESs can communicate with the TEs via scheduler and inform them of their real-time service prices. 
Through the scheduler, TEs can participate in ESs selection, and make wise decisions regarding their daily computing resources consumption.

\begin{figure}[htbp]
	\centering
	\includegraphics[width=1.0\linewidth]{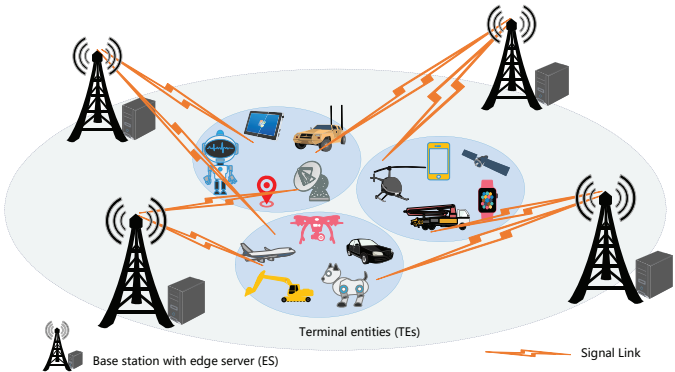}
	\caption{The MEC network scenario in IoT.} \label{fig:IoT}
\end{figure}

% 引出博弈
However, one of the main challenges of task outsourcing is to consider the interests of both parties.
On the one hand, ESs aim to get more profits, and strive for attracting more TEs to use their computing resources.
On the other hand, a rational TE will choose a task strategy that maximizes its own payoff.
Game theory can be used as an effective tool to model the interests of two or more conflicting individuals in trading markets and also load balancing in distributed systems~\cite{Grosu2005Noncooperative}.
The game solution was proved to be a Nash equilibrium solution for the noncooperative game.

% 解释两层博弈
In this paper, we propose a bilateral game framework among multiple ESs (the suppliers) and multiple TEs (the customers) to model the task outsourcing problem as two noncooperative games.
These two games are related to each other and are played simultaneously.
In the first game, the supply function bidding mechanism is employed to model the noncooperative game among ESs. 
In the proposed scheme, each ES, with limited or idle resources, submits a bid to reveal the available capacity ``supplied'' to the market.  
Then the scheduler collects these bids and computes a service price to clear the market so that the supply of the resource to be traded equals the demand.
In particular, all TEs are charged the same service price at one time slot.
The scheme can maximize the profits of ESs.
In the second game, in order to reduce costs, TEs determine the amount of assigned tasks for each time slot
based on the price of that time slot. 
If the price of one time slot is high, there would be fewer tasks assigned,
and if the price is low, there would be more tasks.
% TEs determine the optimal shiftable demand profiles according to ESs' bids to maximize their payoff.
This framework can encourage TEs to assign fewer tasks during peak times or shift some tasks to off-peak times, which flattens the demand curve by peak clipping or valley filling. 

% 这部分原来放在conclusion中，安东尼建议移到Introduction，和上一段重复
% In our proposed bilateral game framework, the task outsourcing problem is modelled as two noncooperative games: the supplier and customer side games. 
% In the supplier game, multiple ESs compete with each other to attract more TEs to use their computing service. Each ES is selfish and tries to maximize its own profit.
% In the customer game, multiple TEs are also competing for limited resources to maximize their payoff.
% The bidirectional interaction between ESs and TEs is conducted through a scheduler, which serves as a third party service agency outsourcing TEs' tasks to ESs.
% Furthermore, the computing service price is acquired by employing supplying function bidding mechanism. The existence of pure-strategy Nash equilibrium for both suppliers and customers is proved. To simulate the process of the game between the two parties, a \textbf{\emph{DTOA}} is proposed to determine the equilibrium. 

% Nevertheless, a new problem model for ESs selection is presented.
% ESs selection scheme is select an appropriate ES from available ones in terms of related information (e.g., computing resources and service prices of ESs).
% In our framework, the TEs will not select the ESs and only determine their optimal task profiles at each time.
% 过载和欠载的问题
% Two factors need to be considered are overload and underload.
% Overloading can easily crush the servers and underloading will result in wasted resources.

% 贡献
In summary, the contributions of this paper are:
 \begin{itemize}
	\item A bilateral-game framework is developed to model the interactions among TEs and ESs.
	\item A supply function bidding mechanism is proposed, where each ES submits a bid to reveal the available capacity ``supplied'' to the market.
	% \item The customers' payoff function consists of the satisfaction level of the ESs from consuming resources.
	\item A \textbf{\emph{DTOA}} is designed to compute the Nash equilibrium.
	\item Simulations show that the proposed mechanism achieves the maximization of bilateral interests.
\end{itemize}

% 段落说明
The remainder of this paper is organized as follows.
In Section~\ref{sec:related work}, the related work about task outsourcing in MEC is introduced.
Section~\ref{sec:System Model} models the task outsourcing problem in the ESs' and TEs' sides.
In Section~\ref{sec:Algorithm}, a \textbf{\emph{DTOA}} is designed to compute the Nash equilibrium in both sides.
Section~\ref{sec:Performance Evaluation} presents simulations showing the performance of the new approach using \textbf{\emph{DTOA}}.
Finally, conclusions are presented in Section~\ref{sec:Conclusion}.

\section{Related work}
\label{sec:related work}
In recent years, significant attention has been devoted to the resource allocation in MEC networks~\cite{jiang2016a}.
Shi Yan \emph{et al.}~\cite{yan2018game} studied the access selection for unmanned aerial vehicles (UAV) and bandwidth allocation of the base station (BS) in a UAV assisted IoT communication network. Wherein, the access competition among groups of UAVs is modeled as a dynamical evolutionary game.
The bandwidth allocation of BSs is formulated as a noncooperative game.
Authors in~\cite{you2016energy} examined resource allocation for a multi-TE MEC offloading system, which is formulated as a convex optimization problem for minimizing the weighted sum of mobile energy consumption.

Chunlin Li \emph{et al.}~\cite{li2019radio} analysed a radio and computing resource allocation problem between an access point and multiple devices in MEC system.
They designed a time average computation rate maximization algorithm to determine the optimal transmit power, and time allocation for the wireless devices.
Junhui Zhao \emph{et al.}~\cite{zhao2019computation} studied a cloud-MEC collaborative computation offloading problem that offloads tasks to automobiles in MEC vehicular networks.
They developed a tasks allocation optimization and collaborative computation offloading scheme to decide the optimal strategies.
An offloading algorithm for shortening the computation time and increasing the system utility was also designed.
Yunlong Gao \emph{et al.}~\cite{gao2019optimal} studied the optimal tradeoff between resource consumption and user experience in designing MEC systems.
Cosmin Avasalcai \emph{et al.}~\cite{avasalcai2019latency} introduced a decentralized resource management algorithm with the purpose of deploying IoT applications at the edge of the network such that end-to-end delay is minimized.

Since the resources of an ES are limited, the ES can not undertake the tasks coming from multiple TEs.
So multiple ESs are needed.
Nevertheless, the matching problem between multiple ESs and multiple TEs becomes a key issue.
Heli Zhang \emph{et al.}~\cite{zhang2017combinational} modeled the matching relationship between ESs and TEs as a commodity trading by applying a multi-round sealed sequential combinational auction mechanism.
In~\cite{gu2019task}, the authors studied task offloading in vehicular MEC environments and modelled the interactions between edges and tasks as a matching game.
They further developed two standalone heuristic algorithms to minimize the average delay while taking the energy consumption and vehicle mobility constraints into consideration.
A three-tier IoT fog network was proposed in~\cite{zhang2017computing}, in which all fog nodes, data service operators and data service subscribers are jointly optimized to achieve the optimal resource allocation in a distributed fashion.

Furthermore, authors in~\cite{liu2017price,tang2019jointly} adopted a price-based mechanism to design efficient resource allocation in a MEC network. For example,~\cite{liu2017price} proposed a price-based distributed method to manage the offloaded tasks from users. Wherein, edge cloud sets prices to maximize its revenue and each user makes an optimal decision to minimize her/his own cost.
The work~\cite{tang2019jointly} proposed a price-based resource allocation mechanism among the MEC server and multiple base stations (BSs). The MEC server tries to provide prices to BSs so as to maximize its own revenue while the BSs determine the computing space to improve the quality of experience.

To summarize the related work above, we observe that the existing resource allocation and matching problem in MEC generally involves edge nodes and clients using resource from an edge node.
However, most of these studies focus either on the system performance or ESs' benefits, while ignoring the TEs' pursuit of maximizing payoff.
Against this backdrop, our paper tries to balance the objectives of both ESs and TEs. 
In this paper, we also adopt a priced-based supply bidding mechanism to solve the resource allocation problem. 

\section{System Model}% 调整结构
\label{sec:System Model}

\subsection{Interaction between TEs and ESs} % //TODO:交互
\begin{figure}[htbp]
	\centering
	\includegraphics[width=0.9\linewidth]{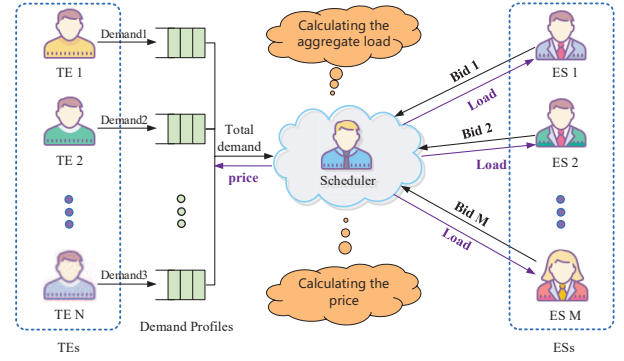}
	\caption{Diagram of a resources transaction market.} \label{fig:Diagram}
\end{figure}
As shown in Fig. \ref{fig:Diagram}, we consider a scheduler-based resources transaction market, which consists of $M$ ESs and $N$ TEs in a MEC network.
ESs act as suppliers who sell computing resources to TEs and TEs act as customers who purchase resources from ESs.
The bidirectional interaction between ESs and TEs is performed through a scheduler, which serves as a third-party agency outsourcing TEs' tasks to ESs.

On the one hand, TEs submit their demand profiles to the scheduler via a communication network. 
On the other hand, ESs compete with each other for acquiring more TEs, and submit bids based on strategies of their opponents and their own resource capacities.
As a response, the scheduler calculates the service price and the aggregated load based on ESs' bids and TEs' total demand.
They are mutually dependent upon each other as decisions on either side can have a bearing on those of the other side.
After receiving the real-time price signal, the TEs will update their demand profiles.
Since the aggregate load depends on the TEs' demand profiles, the behavior of TEs will affect the ESs' bidding strategies. 
The aforementioned process is repeated until both customers and suppliers are satisfied.

We divide one day into a set of $T$ ($T=24$) time slots, denoted as $\mathcal{T} = \{ 1,\cdots,T \}$.
The set of ESs and TEs are represented as $\mathcal{M}= \{ 1,\cdots,M \}$ and $\mathcal{N}=\{1, \ldots, N\}$.
How many resources should ESs provide to the market and how the ESs' bids affect the TEs' demand profiles are questions worth investigation.
We next present the model of both sides in the MEC resources transaction market.

\subsection{Cost and profit of ES}% //TODO:成本ES

For ES $j \in \mathcal{M}$, let $C_{j,t}(.)$ denote the cost function of ES $j$ at time slot $t \enspace (t \in \mathcal{T})$.
Let $R_{j,t}(.)$ denote the revenue function of ES $j$ at the $t$th time slot by providing the computational load.
The profit of ES equals the revenue by providing computing service minus its cost of system overhead. 
Therefore, the profit $P_{j,t}$ of ES $j$ at time slot $t$ can expressed as follows:
\begin{equation}
\label{equ:P_j,t}
	P_{j,t}=R_{j,t}(.)-C_{j,t}(.).
\end{equation} 

We consider that each ES is selfish and tries to maximize its own profit. Thus, the interaction among the profit maximizer ESs can be modeled as a noncooperative game.
The ESs are the players while the bid profiles are the strategies.
Let ${\lambda}_{j,t}$ denote the bid of ES $j$ at time slot $t$.
The target of each ES $j$ is to find the optimal bid ${\lambda}_{j,t}$ to maximize its profit, which can be defined as:
\begin{equation}
\label{equ:maxP} 
	\underset{\lambda_{j, t}}{\operatorname{maximize}} \quad P_{j,t} \quad { j \in \mathcal{M}, t \in \mathcal{T}}.
\end{equation}
By substituting Equ. (\ref{equ:P_j,t}) into Equ. (\ref{equ:maxP}), we can get
\begin{equation}
\label{equ:maxR-C} 
	\underset{\lambda_{j, t}}{\operatorname{maximize}} \quad R_{j,t}(.)-C_{j,t}(.) \quad { j \in \mathcal{M}, t \in \mathcal{T}}.
\end{equation}

We denote by $f_{j, t}$ the task load that ES $j$ willing to generate in the time slot $t$.
We assume that the service price of different ESs in one time slot is the same and denoted as $p_{e}(t)$ at time slot $t$.
The revenue of each ES is equal to the product of its load and the service price.
Hence, the revenue of ES $j$ at time slot $t$ can be represented as
\begin{equation}
\label{equ:R_jt}
R_{j,t}=f_{j, t} \cdot p_{e}(t).
\end{equation}

Similar to~\cite{jalali2015demand}, the ES $j$'s cost function is defined as a quadratic function
\begin{equation*}
\label{equ:C_jt}
C_{j,t}\left(f_{j, t}\right)=a_{j,2} f_{j, t}^{2}+a_{j,1} f_{j, t}+a_{j,0},
\end{equation*}
where $a_{j,2}$, $a_{j,1}$ and $a_{j,0}$ are positive coefficients and model the fact that different ESs incur different costs for serving the tasks.
We note that the cost function is increasing and convex.
Substituting Equ. (\ref{equ:R_jt}) into Equ. (\ref{equ:maxR-C}), the optimization problem can be further rewritten as
\begin{equation}
\label{equ:maxR-C1} 
	\begin{aligned}
		& \underset{\lambda_{j, t}}{\operatorname{maximize}} \quad f_{j, t} \cdot p_{e}(t)-C_{j,t}(f_{j,t}) \\ 
		& {\text { subject to }} \quad {f_{j, t} \geq 0, \quad j \in \mathcal{M}, t \in \mathcal{T}}.
	\end{aligned}
\end{equation}

\subsection{Payoff and payout of TE}% //TODO:回报TE
The demand of each TE consists of two parts: a base demand and a shiftable demand.
On the one hand, a base demand is primarily concerned with real-time tasks, which have high priority.
On the other hand, a shiftable demand has low priority real-time requirements and it can be assigned at any time slot.
The shiftable demand profile of TE $i \enspace (i \in \mathcal{N})$ is defined as $\bm{\chi}_{i}=\left(\chi_{i,1}, \ldots, \chi_{i,T}\right)$ and the base demand of TE $i$ at time slot $t$ is denoted as $r_{i,t}$, which is known and fixed.

The utility of TE $i$ represents the profit that TE $i$ receives when it completes tasks and is denoted as $U_{i}(.)$.
Exactly, the utility function of TE $i$ is the utility for the tasks rather than the service time or applications.
Similar to~\cite{samadi2010optimal}, we employ the quadratic utility function because it is non-decreasing and its marginal benefit is non-decreasing,
\begin{equation}
\label{equ:U}
	U_{i}(x)=
		\left\{
			\begin{aligned}
				& {w_{i, t} x-\frac{\alpha_{i, t}}{2} x^{2},} \quad {0 \leq x \leq \frac{w_{i, t}}{\alpha_{i, t}}} \\
				& {\frac{w_{i, t}^{2}}{2 \alpha_{i, t}},} \quad \quad \quad \quad \quad {x > \frac{w_{i, t}}{\alpha_{i, t}}}
			\end{aligned}\right.,
\end{equation}
where $x=(\chi_{i,t}+r_{i,t})$, $w_{i, t}$ and $\alpha_{i, t}, i \in \mathcal{N}$ are coefficients that reflects the dynamic changes of TE $i$'s demand.

The payout function quantifies the payout that TE $i$ needs to pay the ESs task completion.
Without loss of generality, we define the payout of TE $i$' as the product of demand and the service, i.e.
\begin{equation}
\label{equ:payout}
	Payout_{i,t}=(\chi_{i,t}+r_{i,t}) \cdot p_{e} (\bm{\lambda}_{t}, L_{t}).
\end{equation}
The payoff function quantifies the final benefits of TE $i$ and represents the satisfaction of using the service.
Thus, we denote the payoff of TE $i$ as its utility minus payout i.e.
\begin{equation}
\label{equ:payoff}	
	Payoff_{i} = Utility_{i} - Payout_{i}.
\end{equation}
Let $u_{i}$ denote the payoff of TE $i$.
By substituting Equ. (\ref{equ:U}) and Equ. (\ref{equ:payout}) into Equ. (\ref{equ:payoff}), we can obtain
\begin{equation}
\label{equ:u}
	\begin{aligned}
		u_{i}\left(\bm{\chi}_{i}, \bm{\chi}_{-i}\right)=\sum_{t \in \mathcal{T}}\Big(&U_{i}\left(\chi_{i,t}+r_{i,t}\right)\\ &-\left(\chi_{i,t}+r_{i,t}\right) p_{e} \left(\bm{\lambda}_{t}, L_{t}\right)\Big),
	\end{aligned}	
\end{equation}
where $\bm{\chi}_{-i}$ denotes the vector of the demand profile of other TEs and $\bm{\chi}_{-i}=\left(\bm{\chi}_{1}, \dots, \bm{\chi}_{i-1}, \bm{\chi}_{i+1}, \ldots, \bm{\chi}_{N}\right)$.  In Equ. (\ref{equ:u}), the utility is a function related to $(\chi_{i,t}+r_{i,t})$.

Each TE tries to maximize its payoff by determining its shiftable demand profile.
Thus, the interaction between TEs can be modeled as a noncooperative game.
The TEs are participants while the shiftable demand profiles are the strategies of the noncooperative game.

Let $\bm{\chi}_{i}^{*}$ denote the optimal demand profile of TE $i$ in the Nash equilibrium and $Q_{i}^{total}$ denote the total daily shiftable demand of TE $i$ which is fixed and known.
Let $L_{t}$ denote the aggregate load demand of the ESs at time slot $t$ and $L_{t}=\sum_{j \in \mathcal{N}}\left(\chi_{j,t}+r_{j,t}\right)$. Considering TE $i$, the optimization problem can be formulated as follows when other TEs' profiles are fixed:
\begin{equation}
\label{equ:max_u}
	\begin{aligned}
		& \underset{\bm{\chi}_{i}}{\operatorname{maximize}} \quad
        u_{i}\left(\bm{\chi}_{i}, \bm{\chi}_{-i}\right) \\
        & {\text { subject to }} \quad \sum_{t \in \mathcal{T}} \chi_{i, t}=Q_{i}^{\text {total}}, \\
        &\quad \quad \quad \quad \quad \quad \chi_{i, t} \geq 0, \forall i \in \mathcal{N}.
	\end{aligned}
\end{equation}

% 原来的，We substitute Equ. \ref{equ:p2} into Equ. \ref{equ:u}. 因为Equ. \ref{equ:p2}在后面，所以改了一种写法
% \begin{equation}
% \label{equ:max_u}
% 	\begin{aligned}
% 		& \underset{\bm{\chi}_{i}}{\operatorname{maximize}} \quad
% 		\sum_{t \in \mathcal{T}} \Big(U_{i}\left(\chi_{i,t}+r_{i,t}\right)\\ 
% 		&\quad\quad\quad\quad\quad -\left(\chi_{i,t}+r_{i,t}\right) \frac{\sum_{j \in \mathcal{N}}\left(\chi_{j,t}+r_{j,t}\right)}{\sum_{r \in \mathcal{M}} \lambda_{r, t}} \Big) \\ \\
% 		& {\text { subject to }} \quad {\sum_{t \in \mathcal{T}} \chi_{i, t}=Q_{i}^{\text {total}},\enspace {\chi_{i, t} \geq 0, \forall i \in \mathcal{N}}}.
% 	\end{aligned}
% \end{equation}

\subsection{Market mechanism with supply function bidding}% //TODO:供给函数机制
In this section, we employ a supply function bidding mechanism to model the relationship between market demand for services and its price.
We use a class of supply functions with parameters.
The bids submitted by ESs reveal their available resource capacities ``supplied'' to the market. 

\begin{table}[htbp]
	\caption{Definitions of Mathematical Notations}
	\label{tab:1}
	% \vspace{-0.8em}
	\begin{tabular}{@{}|c|l|@{}}
		\toprule
		\textbf{Notation}&\textbf{Definition}\\ 
		\midrule
		$L_t$&TEs' total load demand at time slot $t$\\
		\midrule % \hline是实线
		$f_{j,t}$&The supply function of the ES $j$ at time slot $t$\\
		\midrule
		$p_{e}(t)$&The computing service price at time slot $t$\\
		\midrule
		$p_{1},\cdots,p_{K}$&\begin{tabular}[c]{@{}l@{}}$K$ break points of the price-wise linear function of\\ all ESs\end{tabular} \\
		\midrule
		$\lambda_{j,t}^{k}$&\begin{tabular}[c]{@{}l@{}}The slope of the function between the break points\\ $p_{k-1}$ and $p_{k}$\end{tabular}\\
		\midrule
		$\lambda_{j,t}^{1}$&\begin{tabular}[c]{@{}l@{}}The slope of the function between the origin and\\ break point $p_{1}$\end{tabular}\\
		\bottomrule
	\end{tabular}
\end{table}
\begin{figure}[htbp]
	\centering
	\includegraphics[width=0.9\linewidth]{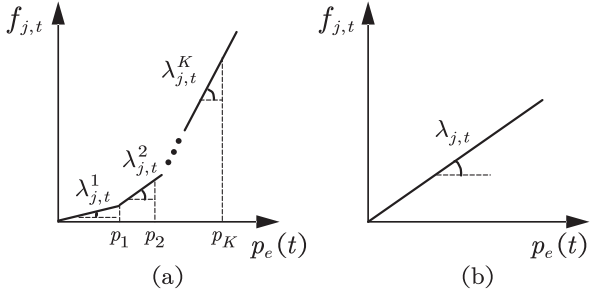}
	\caption{(a) Piece-wise linear. \enspace \enspace (b) Affine supply functions.} \label{fig:Affine}
\end{figure}
The notations used in the supplier side model are presented in Table \ref{tab:1}. We assume that the supply function $f_{j,t}$ is chosen from the family of increasing and convex price-wise linear functions of $p_{e}(t)$~\cite{kamyab2015demand}. 
Fig. \ref{fig:Affine}(a) shows an increasing and convex piece-wise linear supply function.
The abscissa $p_{e}(t)$ indicates the price and the ordinate $f_{j,t}$ denotes the load supplied by the TE $j$ at time slot $t$.
There exists $K$ break points on the abscissa of the Fig. \ref{fig:Affine}(a).
$\lambda_{j,t}^{k} \geq 0$ represents the slope of the function between the break points $p_{k-1}$ and $p_{k}$.
Fig. \ref{fig:Affine}(b) shows the affine supply function.

At time slot $t \enspace (t \in \mathcal{T})$, we use the vector $\bm{\lambda}_{j,t} = (\lambda_{j,t}^{1},\cdots,\lambda_{j,t}^{K})$ to denote the bid profile of ES $j \enspace (j \in \mathcal{M})$. Thus, we obtain
\begin{equation}
\label{equ:f1}
	f_{j, t}\left(p_{e}(t), \bm{\lambda}_{j, t}\right) \!=\!
	\left\{
		\begin{aligned}
			& \lambda_{j, t}^{1} p_{e}(t), \enspace 0 \leq p_{e}(t) \leq p_{1} \\\\
			& \lambda_{j, t}^{k} p_{e}(t) \!+\!\lambda_{j, t}^{k-1} p_{k-1}, \enspace p_{k-1} < p_{e}(t) \leq p_{k}
		\end{aligned}
	\right..
\end{equation}

It is assumed that each ES submits $\bm{\lambda}_{j,t}$ as a bid profile to the scheduler at time slot $t$.
For each ES $j$, the bid profile describes the number of tasks that it is willing to admit.
We can use $\bm{\lambda}_{t}$ to represent the bid profiles of all ESs at time slot $t$ and $\bm{\lambda}_{t}= \{ \bm{\lambda}_{1,t},\cdots,\bm{\lambda}_{M,t} \}$.
In response to ESs, the scheduler sets the price $p_{e}(t)$ to clear market.
In economics, market clearing means the supply of what is traded equals the demand, so that there is no leftover supply or demand.
In this case, the demand of all TEs is the same as the load supplied by all ESs.
Although the fluctuation in TEs' demand will drive changes in ESs' bid profiles, the demand and supply remains balanced.
The equivalence further builds up the connection between the supplier game and the customer game.
Hence, it can expressed as
\begin{equation}
\label{equ:f2}
		\sum_{j \in \mathcal{M}} f_{j, t}\left(p_{e}(t), \bm{\lambda}_{j, t}\right)=L_{t}, \quad t \in \mathcal{T}.
\end{equation}
According to Equ. (\ref{equ:f1}) and Equ. (\ref{equ:f2}), we have
\begin{equation}
\label{equ:Lt}
	L_{t} \!=\!
	\left\{
		\begin{aligned}
			& \sum_{j \in \mathcal{M}} \left( \lambda_{j, t}^{1} p_{e}(t)\right), \enspace 0 \leq p_{e}(t) \leq p_{1} \\\\
			& \sum_{j \in \mathcal{M}} \left( \lambda_{j, t}^{k} p_{e}(t) \!+\!\lambda_{j, t}^{k-1} p_{k-1}\right), \enspace p_{k-1} < p_{e}(t) \leq p_{k}
		\end{aligned}
	\right..
\end{equation}
According to Equ.~\ref{equ:Lt}, we can further calculate the service price function as follows:
\begin{equation}
\label{equ:pet}
p_{e}(t)\!=\!
	\left\{
		\begin{aligned}
			& \frac{L_{t}}{\sum_{j \in \mathcal{M}} \lambda_{j,t}^{1}}, \enspace 0 \leq p_{e}(t) \leq p_{1}\\\\
			& \frac{L_{t}-\sum_{j \in \mathcal{M}} \left( \lambda_{j, t}^{k-1} p_{k-1} \right)}
			{\sum_{j \in \mathcal{M}} \lambda_{j, t}^{k}}, \quad p_{k-1} < p_{e}(t) \leq p_{k}.			
		\end{aligned}
	\right..
\end{equation}
At time slot $t$, the service price of different ESs is the same.

In~\cite{baldick2004theory}, the affine supply function $f_{j, t}\left(p_{e}(t), \bm{\lambda}_{j, t}\right) =\lambda_{j, t}^{1} p_{e}(t)$ is used as a special case of the aforementioned piece-wise linear functions.
Almost all the results of affine supply functions can be generalized to the piece-price affine supply function~\cite{chen2006coevolutionary}.
As for Equ. (\ref{equ:pet}), it can be concluded that the affine function is equivalent to the piece-wise linear supply function between two break points.
Each piecewise function of Fig. \ref{fig:Affine}(a) can be regarded as a linear function in Fig. \ref{fig:Affine}(b).
As a matter of fact, the term $\lambda_{j, t}^{k-1} p_{k-1}$ is fixed when we are between break points $p_{k-1}$ and $p_{k}$.
Therefore, without loss of generality, we generalize the results from the affine functions to piece-wise linear functions. 
Therefore, the computing service price can be given as follows for an affine supply function:
\begin{equation}
\label{equ:p2}
	p_{e}(t)=\frac{L_{t}}{\sum_{j \in \mathcal{M}} \lambda_{j,t}^{1}}, \quad t \in \mathcal{T}.
\end{equation}

For simplicity, we use the notation $\lambda_{j,t}$ instead of $\lambda_{j,t}^{1}$ to represent the affine supply function of ES $j$. Meanwhile, we use $\bm{\lambda}_{t}=\left(\lambda_{1, t}, \ldots, \lambda_{M, t}\right)$ to denote the bids profile for all ESs at time slot $t$. As Equ. (\ref{equ:p2}) shows, the computing service price is related to $\lambda_{j, t} \enspace (j \in \mathcal{M})$ and $L_{t}$. Hence, the price function can be denoted as $p_{e} \left(\bm{\lambda}_{t}, L_{t}\right)$. As suggested by Equ. (\ref{equ:f1}), supply function $f_{j,t}$ for ES $j$ can be expressed as
\begin{equation}
\label{equ:f3} 
	f_{j, t}\left(p_{e} \left(\bm{\lambda}_{t}, L_{t}\right), \lambda_{j, t}\right)=\frac{\lambda_{j, t} L_{t}}{\sum_{r \in \mathcal{M}} \lambda_{r, t}}, \quad t \in \mathcal{T}.
\end{equation}

Similar to the computing service, the supply function can be represented by $f_{j, t}\left(\bm{\lambda}_{t}, L_{t}\right)$. Let $\bm{\lambda}_{-j,t}$ denote the submitted bids of other ESs except for ES $j$. So it can be defined as $\bm{\lambda}_{-j,t}=\left(\lambda_{1, t}, \ldots, \lambda_{j-1, t}, \lambda_{j+1, t}, \ldots, \lambda_{M, t}\right)$.
Hence, According to Equ. (\ref{equ:maxR-C1}) and Equ. (\ref{equ:f3}), the profit function of ES $j$ is rewritten as 
\begin{equation}
\label{equ:mu}
	P_{j,t}(\lambda_{j, t}, \bm{\lambda}_{-j, t}) = \frac{\lambda_{j, t} L_{t}^{2}}{\left( \sum_{r \in \mathcal{M}} \lambda_{r, t} \right)^{2}}-C_{j}\left(\frac{\lambda_{j, t} L_{t}}{\sum_{r \in \mathcal{M}} \lambda_{r, t}}\right).
\end{equation}
When other ESs' bids are fixed, the ES $j$ tries to find the optimal bid $\lambda_{j,t}^{*}$ by solving the following optimization problem:
\begin{equation}
\label{equ:max1} 
    \begin{aligned}
        & \underset{\lambda_{j, t}}{\operatorname{maximize}} \quad {\frac{\lambda_{j, t} L_{t}^{2}}{\left( \sum_{r \in \mathcal{M}} \lambda_{r, t} \right)^{2}}-C_{j}\left(\frac{\lambda_{j, t} L_{t}}{\sum_{r \in \mathcal{M}} \lambda_{r, t}}\right)} \\\\ 
        & {\text { subject to }} \quad {\lambda_{j, t} \geq 0, \quad j \in \mathcal{M}, t \in \mathcal{T}}.
    \end{aligned}
\end{equation}

\subsection{Nash equilibrium analysis}% //TODO:博弈和纳什均衡

The following section will explain that the ES's game (Equ. (\ref{equ:max1})) has a unique Nash equilibrium, as shown by the lemma below.
\begin{lemma}
\label{lem:1}
    Assume that the bids profile in Nash equilibrium at time slot $t$ is denoted as $\bm{\lambda}_{t}^{*}$. When the Nash equilibrium is reached, it will satisfy $\lambda_{j, t}^{*}<\sum_{r \in \mathcal{M}, r \neq j} \lambda_{r, t}^{*}$ for all ESs.
\end{lemma}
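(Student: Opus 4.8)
The plan is to turn the equilibrium condition into a first-order condition for each ES's single-variable best-response problem and to read off the claimed inequality from the strict positivity of the marginal cost. First I would fix a time slot $t$ and an ES $j$, freeze the opponents' bids at their equilibrium values, and abbreviate $\Lambda_{-j}=\sum_{r\in\mathcal{M},\,r\neq j}\lambda_{r,t}^{*}$ and $S=\lambda_{j,t}+\Lambda_{-j}$. By Equ.~(\ref{equ:mu}) the profit, viewed as a function of $\lambda_{j,t}\ge 0$ only, is $P_{j,t}=\lambda_{j,t}L_t^2/S^2-C_j\!\left(\lambda_{j,t}L_t/S\right)$. Using $\partial(\lambda_{j,t}/S^2)/\partial\lambda_{j,t}=(\Lambda_{-j}-\lambda_{j,t})/S^3$ and $\partial f_{j,t}/\partial\lambda_{j,t}=L_t\Lambda_{-j}/S^2>0$, differentiation gives
\begin{equation*}
\frac{\partial P_{j,t}}{\partial \lambda_{j,t}}=\frac{L_t^{2}(\Lambda_{-j}-\lambda_{j,t})}{S^{3}}-C_j'\!\left(f_{j,t}\right)\frac{L_t\Lambda_{-j}}{S^{2}}.
\end{equation*}
Since $C_j$ is the quadratic cost with positive coefficients, $C_j'(f_{j,t})=2a_{j,2}f_{j,t}+a_{j,1}\ge a_{j,1}>0$ for all $f_{j,t}\ge 0$, so the second term is strictly negative whenever $\Lambda_{-j}>0$.

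Next I would argue by contradiction. Suppose $\lambda_{j,t}^{*}\ge\Lambda_{-j}$ at the equilibrium. Then in the display the first term is $\le 0$ and the second is $<0$, so $\partial P_{j,t}/\partial\lambda_{j,t}<0$ at $\lambda_{j,t}=\lambda_{j,t}^{*}$; moreover $\lambda_{j,t}^{*}\ge\Lambda_{-j}>0$, so $\lambda_{j,t}^{*}$ lies strictly inside the feasible ray $[0,\infty)$ and ES $j$ could raise its profit by marginally lowering its bid. This contradicts the fact that at a Nash equilibrium $\lambda_{j,t}^{*}$ is a best response to $\bm{\lambda}_{-j,t}^{*}$. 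Hence $\lambda_{j,t}^{*}<\Lambda_{-j}=\sum_{r\neq j}\lambda_{r,t}^{*}$, which is the assertion.

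Before that, I would settle the degenerate case $\Lambda_{-j}=0$: the price in Equ.~(\ref{equ:p2}) is only defined when $\sum_{r}\lambda_{r,t}>0$, and if ES $j$ were the sole supplier its profit $L_t^{2}/\lambda_{j,t}-C_j(L_t)$ would be strictly decreasing with no maximiser on $(0,\infty)$; therefore at any Nash equilibrium at least two ESs submit strictly positive bids, so $\Lambda_{-j}>0$ for every $j$, and the remaining boundary possibility $\lambda_{j,t}^{*}=0$ makes the inequality trivially true.

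The hard part is not the derivative computation, which is routine, but the care needed around where the best response sits: one has to confirm it is attained at a finite bid --- which holds because $P_{j,t}(\lambda_{j,t})\to -C_j(L_t)<P_{j,t}(0)$ as $\lambda_{j,t}\to\infty$, so the supremum over $[0,\infty)$ is achieved --- and one has to exclude the monopoly/all-zero bid configurations, so that the first-order argument genuinely applies at the equilibrium point.
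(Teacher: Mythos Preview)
Your argument is correct and follows essentially the same route as the paper: both show, via a first-derivative sign analysis, that the profit $P_{j,t}$ is strictly decreasing in $\lambda_{j,t}$ once $\lambda_{j,t}\ge\sum_{r\neq j}\lambda_{r,t}$, so a best response must lie strictly below that threshold. The paper differentiates only the revenue term $\Pi_{j,t}=\lambda_{j,t}L_t^2/S^2$ and then asserts the conclusion for $P_{j,t}$; your version is tighter because you differentiate the full profit, use $C_j'(\cdot)>0$ explicitly to control the cost contribution, and dispose of the degenerate case $\Lambda_{-j}=0$ and the attainment of a finite maximiser, none of which the paper addresses.
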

\begin{proof}
The function $\Pi_{j, t}\left(\lambda_{j, t}, \bm{\lambda}_{-j, t}\right)$ is expressed as follows:
\begin{equation}
\label{equ:Pi} 
	\Pi_{j, t}\left(\lambda_{j, t}, \bm{\lambda}_{-j, t}\right)=\frac{\lambda_{j, t} L_{t}^{2}}{\left(\sum_{r \in \mathcal{M}} \lambda_{r, t}\right)^{2}}.
\end{equation}

As the formula above suggests, $\Pi_{j, t}\left(\lambda_{j, t}, \bm{\lambda}_{-j, t}\right)$ is the first term in $P_{j, t}\left(\lambda_{j, t}, \bm{\lambda}_{-j, t}\right)$.
From Equ. (\ref{equ:Pi}), we can calculate the first derivative function as follows
\begin{equation}
\begin{aligned}
	&\frac{d \Pi_{j, t}\left(\lambda_{j, t}, \bm{\lambda}_{-j, t}\right)}{d \lambda_{j, t}}\\
	&=\frac{L_{t}^{2} \cdot\left(\sum_{r \in \mathcal{M}} \lambda_{r, t}\right)^{2}-2 \lambda_{j, t} L_{t}^{2}\left(\sum_{r \in \mathcal{M}} \lambda_{r, t}\right)}{\left(\sum_{r \in \mathcal{M}} \lambda_{r, t}\right)^{4}}
\end{aligned}
\end{equation}
Let
\begin{equation*}
\frac{d \Pi_{j, t}\left(\lambda_{j, t}, \bm{\lambda}_{-j, t}\right)}{d \lambda_{j, t}}>0,
\end{equation*}
we can get
\begin{equation}
\label{equ:dPi}
	\left(\sum_{r \in \mathcal{M}} \lambda_{r, t}\right)^{2}-2 \lambda_{j, t} \sum_{r \in \mathcal{M}} \lambda_{r, t}>0.
\end{equation}
The Equ. (\ref{equ:dPi}) is equivalent to
\begin{equation}
\label{equ:dPi1}
	\left(\lambda_{j, t}+\sum_{r \in \mathcal{M}, r \neq j} \lambda_{r, t}\right)^{2}-2 \lambda_{j, t} \left(\lambda_{j, t}+\sum_{r \in \mathcal{M}, r \neq j} \lambda_{r, t}\right)>0.
\end{equation}
From Equ. (\ref{equ:dPi1}), we can derive that
\begin{equation*}
	0 \leq \lambda_{j, t}<\sum_{r \in \mathcal{M}, r \neq j} \lambda_{r, t}.
\end{equation*}

In summary, we can conclude that $P_{j, t}\left(\lambda_{j, t}, \bm{\lambda}_{-j, t}\right)$ is an increasing function when $0 \leq \lambda_{j, t}<\sum_{r \in \mathcal{M}, r \neq j} \lambda_{r, t}$. And it becomes a decreasing function when $ \lambda_{j, t} \geq \sum_{r \in \mathcal{M}, r \neq j} \lambda_{r, t}$. Thus, in order to maximize profit, we should meet the constraint $0 \leq \lambda_{j, t}<\sum_{r \in \mathcal{M}, r \neq j} \lambda_{r, t}$. In the Nash equilibrium, the bid of ES $j$ at time slot $t$ is denoted as $\lambda_{j,t}^{*}$. Therefore, we can conclude that $\lambda_{j, t}^{*}<\sum_{r \in \mathcal{M}, r \neq j} \lambda_{r, t}^{*} \enspace (j \in \mathcal{M})$.
\end{proof}

Similar to~\cite{johari2006parameterized}, the proof for the following theorem given as follows.
\begin{thm}
\label{the:1}
	The ES's noncooperative game has a unique Nash equilibrium. Furthermore, the Nash equilibrium is the solution of the following convex optimization problem:
	\begin{equation}
	\label{equ:max2} 
		\begin{array}{cl}{
			\underset{0 \leq f_{j, t}<\frac{L_{t}}{2}}{\operatorname{maximize}}} & {\sum \limits_{j \in \mathcal{M}}-\Psi_{j}\left(f_{j, t}\right)} \\ \\
			{\text {subject to}} & {\sum \limits_{j \in \mathcal{M}} f_{j, t}=L_{t}},
		\end{array}
	\end{equation}
	where
	\begin{equation}
	\label{equ:Psi} 
		\Psi_{j}\left(s_{j, t}\right)=\left(\frac{L_{t}-f_{j, t}}{L_{t}-2 f_{j, t}}\right) C_{j}\left(f_{j, t}\right)-\int_{0}^{f_{j, t}} \frac{L_{t} C_{j}\left(\Pi_{j}\right)}{\left(L_{t}-2 \Pi_{j}\right)^{2}} d \Pi_{j}.
	\end{equation}
\end{thm}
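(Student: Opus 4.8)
The plan is to adapt the supply-function equilibrium argument of~\cite{johari2006parameterized} to the quadratic cost model used here, and to show that the Nash conditions of the game in Equ.~(\ref{equ:max1}) coincide with the Karush--Kuhn--Tucker (KKT) conditions of the convex program in Equ.~(\ref{equ:max2}). First I would fix an opponent profile $\bm{\lambda}_{-j,t}$ with $\Lambda_{-j}:=\sum_{r\in\mathcal{M},\,r\neq j}\lambda_{r,t}>0$ and reparametrize ES $j$'s decision: instead of the bid $\lambda_{j,t}\ge 0$, use the load it actually receives, $f_{j,t}=\lambda_{j,t}L_{t}/(\lambda_{j,t}+\Lambda_{-j})$, which for fixed $\Lambda_{-j}$ is a strictly increasing bijection from $[0,\infty)$ onto $[0,L_{t})$. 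Inverting gives $\lambda_{j,t}=\Lambda_{-j}f_{j,t}/(L_{t}-f_{j,t})$ and $\sum_{r\in\mathcal{M}}\lambda_{r,t}=\Lambda_{-j}L_{t}/(L_{t}-f_{j,t})$, so by Equ.~(\ref{equ:p2}) the clearing price becomes $p_{e}(t)=(L_{t}-f_{j,t})/\Lambda_{-j}$ and the profit in Equ.~(\ref{equ:mu}) reads $P_{j,t}=f_{j,t}(L_{t}-f_{j,t})/\Lambda_{-j}-C_{j}(f_{j,t})$.

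Next I would settle the best-response structure. Differentiating twice, $\partial^{2}P_{j,t}/\partial f_{j,t}^{2}=-2/\Lambda_{-j}-C_{j}''(f_{j,t})<0$ since $C_{j}$ is convex and $\Lambda_{-j}>0$; hence $P_{j,t}$ is strictly concave in $f_{j,t}$ and ES $j$'s best response is unique. By Lemma~\ref{lem:1} it satisfies $f_{j,t}<L_{t}/2$ (equivalently $\lambda_{j,t}<\Lambda_{-j}$), and at an interior optimum the first-order condition is $(L_{t}-2f_{j,t})/\Lambda_{-j}=C_{j}'(f_{j,t})$. Using $\Lambda_{-j}=(\sum_{r\in\mathcal{M}}\lambda_{r,t})(L_{t}-f_{j,t})/L_{t}$ and $p_{e}(t)=L_{t}/\sum_{r\in\mathcal{M}}\lambda_{r,t}$, this rewrites in price-free form as $\frac{L_{t}-f_{j,t}}{L_{t}-2f_{j,t}}C_{j}'(f_{j,t})=p_{e}(t)$, which at a Nash equilibrium must hold for every $j\in\mathcal{M}$ simultaneously, together with market clearing $\sum_{j\in\mathcal{M}}f_{j,t}=L_{t}$ from Equ.~(\ref{equ:f2}).

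The key computation is to differentiate $\Psi_{j}$ from Equ.~(\ref{equ:Psi}). With $g(f)=\frac{L_{t}-f}{L_{t}-2f}$ one has $g'(f)=\frac{L_{t}}{(L_{t}-2f)^{2}}$, so $\frac{d}{df}\bigl[g(f)C_{j}(f)\bigr]=\frac{L_{t}C_{j}(f)}{(L_{t}-2f)^{2}}+g(f)C_{j}'(f)$ and the first term is exactly the integrand of Equ.~(\ref{equ:Psi}); the two awkward terms cancel, leaving $\Psi_{j}'(f)=\frac{L_{t}-f}{L_{t}-2f}C_{j}'(f)$. Hence the Nash conditions above say precisely that $\Psi_{j}'(f_{j,t}^{*})$ equals the same number $p_{e}(t)$ for all $j$ while $\sum_{j}f_{j,t}^{*}=L_{t}$ --- that is, $\{f_{j,t}^{*}\}$ is a KKT point of Equ.~(\ref{equ:max2}) with equality-constraint multiplier $-p_{e}(t)$ and inactive box constraints. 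Conversely, from any feasible KKT point of Equ.~(\ref{equ:max2}) with common value $\nu=\Psi_{j}'(f_{j,t})>0$ I would set $\lambda_{j,t}:=f_{j,t}/\nu$, check that $\sum_{r}\lambda_{r,t}=L_{t}/\nu$, $p_{e}(t)=\nu$, and that the induced loads are the $f_{j,t}$, and conclude from the strict concavity above that each such $\lambda_{j,t}$ is the unique best response --- so the reconstructed profile is a Nash equilibrium. This correspondence reduces existence and uniqueness of the equilibrium to existence and uniqueness of the optimizer of Equ.~(\ref{equ:max2}).

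Finally I would verify that Equ.~(\ref{equ:max2}) is a well-posed convex program with a unique solution: the feasible set $\{\sum_{j}f_{j,t}=L_{t},\ 0\le f_{j,t}<L_{t}/2\}$ is convex (and nonempty once $M\ge 3$, which Lemma~\ref{lem:1} already forces), and since $C_{j}',C_{j}''>0$ on $[0,L_{t}/2)$ we get $\Psi_{j}''(f)=\frac{L_{t}}{(L_{t}-2f)^{2}}C_{j}'(f)+\frac{L_{t}-f}{L_{t}-2f}C_{j}''(f)>0$, so every $-\Psi_{j}$ is strictly concave and the objective is strictly concave. Attainment needs a short boundary step because the box is open: since $\Psi_{j}'(f)\ge\frac{L_{t}-f}{L_{t}-2f}a_{j,1}$ is not integrable as $f\to L_{t}/2^{-}$, $\Psi_{j}(f)\to+\infty$ there, so the objective tends to $-\infty$ whenever some $f_{j,t}\to L_{t}/2$; upper semicontinuity on the compact set $\{\sum_{j}f_{j,t}=L_{t},\ 0\le f_{j,t}\le L_{t}/2\}$ then gives a maximizer, which must be interior, and strict concavity makes it unique. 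Mapping it back yields the unique Nash equilibrium characterized by Equ.~(\ref{equ:max2}). I expect the two delicate points to be the bookkeeping identity $\Psi_{j}'(f)=\frac{L_{t}-f}{L_{t}-2f}C_{j}'(f)$, on which the whole KKT--Nash correspondence hinges, and the boundary/compactness argument forced by the strict constraint $f_{j,t}<L_{t}/2$.
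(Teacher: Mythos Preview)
Your proof takes essentially the same approach as the paper: both show that the first-order Nash conditions of Equ.~(\ref{equ:max1}) and the KKT conditions of Equ.~(\ref{equ:max2}) coincide via the key identity $\Psi_{j}'(f)=\frac{L_{t}-f}{L_{t}-2f}\,C_{j}'(f)$, with the Lagrange multiplier identified as the clearing price $p_{e}(\bm{\lambda}_{t},L_{t})$. Your treatment is more complete---you explicitly verify strict concavity of $-\Psi_{j}$, handle attainment on the open box $f_{j,t}<L_{t}/2$, and spell out the converse (KKT $\Rightarrow$ Nash) direction---whereas the paper simply matches the two first-order conditions and asserts that existence and uniqueness of the equilibrium reduce to those of the optimizer of Equ.~(\ref{equ:max2}).
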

\begin{proof}
	According to lemma \ref{lem:1}, we can infer that the load supplied by each ES at time slot $t$ does not exceed $L_{t}/2$ at the Nash equilibrium. The Lagrange function of the optimization problem in Equ. (\ref{equ:max2}) is denoted as $F$. Thus, we have
	\begin{equation}
	\label{equ:F} 
		F=\sum_{j \in \mathcal{M}}-\Psi_{j}\left(f_{j, t}\right)+\phi \left(\sum_{j \in \mathcal{M}} f_{j, t}-L_{t}\right),
	\end{equation}
	where $\phi$ denotes the Lagrange multiplier.
	We can obtain the following expression through the first-order optimality function.
	\begin{equation}
	\label{equ:F_d1} 
		\left(\frac{\partial F}{\partial f_{j, t}^{*}}\right)\left(f_{j, t}-f_{j, t}^{*}\right) \leq 0, \quad \forall j \in \mathcal{M},
	\end{equation}
	where $f_{j, t}^{*}$ is defined as the supply function in equilibrium, while $\phi^{*}$ is the Lagrange multiplier in equilibrium.

	From Equ. (\ref{equ:Psi}), $({\partial F}/{\partial f_{j, t}^{*}})$ can be expressed as follows:
	\begin{equation}
	\label{equ:F_d2}
		\frac{\partial F}{\partial f_{j, t}^{*}}=\phi^{*}-\left(\frac{L_{t}-f_{j, t}^{*}}{L_{t}-2 f_{j, t}^{*}}\right) C_{j}^{\prime}\left(f_{j, t}^{*}\right).
	\end{equation}
	
	We assume the first-order optimality condition for the optimization problem in Equ. (\ref{equ:max1}). Thus, we obtain
	\begin{equation}
	\label{equ:mu_d}
		\left(\frac{\partial P_{j, t}}{\partial \lambda_{j, t}}\right)\left(\lambda_{j, t}-\lambda_{j, t}^{*}\right) \leq 0, \quad \forall j \in \mathcal{M}.
	\end{equation}
	From Equ. (\ref{equ:mu}), ${\partial P_{j, t}}/{\partial \lambda_{j, t}}$ is calculated as follows:
	\begin{equation}
	\label{equ:pi2}
		\frac{\partial P_{j, t}}{\partial \lambda_{j, t}}=p_{e} \left(\bm{\lambda}_{t}, L_{t}\right)-\frac{L_{t}-f_{j, t}^{*}}{L_{t}-2 f_{j, t}^{*}} C_{j}^{\prime}\left(f_{j, t}^{*}\right).
	\end{equation}
	By substituting Equ. (\ref{equ:pi2}) into Equ. (\ref{equ:mu_d}), we can write the optimality condition for Nash equilibrium as follows
	\begin{equation}
	\label{equ:p3}
		\left(p_{e} \left(\bm{\lambda}_{t}, L_{t}\right)-\frac{L_{t}-f_{j, t}^{*}}{L_{t}-2 f_{j, t}^{*}} C_{j}^{\prime}\left(f_{j, t}^{*}\right)\right)\left(\lambda_{j, t}-\lambda_{j, t}^{*}\right) \leq 0.
	\end{equation}

	From Equ. (\ref{equ:F_d1}) and Equ. (\ref{equ:p3}), we can see that the Lagrange multiplier is actually the price $p_{e}(\bm{\lambda}_{t}, L_{t})$ of the computing service.
	In addition, the optimality condition Equ. (\ref{equ:F_d1}) is equivalent to Equ.\ref{equ:p3}.
	Therefore, the existence and uniqueness of the Nash equilibrium is equivalent to proving the existence and uniqueness of the optimal point of problem Equ. (\ref{equ:max2}).
\end{proof}

In Theorem \ref{lem:1}, it is proved that the ES's game has a unique Nash equilibrium solution, whose strategies are determined by the aggregate load $L_t$. Besides, a ES can scale-up and scale-down its resource capacity according to different market demands. Thus, ESs will bid differently for different levels of load.

We next analyze the existence of Nash equilibrium for the customer side game, which is proved by the theorem below.

\begin{thm}
	The customers' optimization problem is a convex programming problem. In fact, the customer side game Equ. (\ref{equ:max_u}) is an n-person game. It has a unique pure strategy Nash equilibrium.
\end{thm}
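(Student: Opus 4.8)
The plan is to prove the statement in three stages: that each TE's best-response problem in Equ.~(\ref{equ:max_u}) is a convex program, that a pure-strategy Nash equilibrium exists, and that it is unique. For the first stage, fix the other TEs' profiles $\bm{\chi}_{-i}$. The feasible set $\{\bm{\chi}_i\ge 0:\sum_{t\in\mathcal{T}}\chi_{i,t}=Q_{i}^{\text{total}}\}$ is a scaled simplex, hence nonempty, convex and compact. Writing $p_e(\bm{\lambda}_t,L_t)=L_t/A_t$ with $A_t:=\sum_{j\in\mathcal{M}}\lambda_{j,t}>0$ fixed, and splitting $L_t=(\chi_{i,t}+r_{i,t})+S_{-i,t}$ where $S_{-i,t}:=\sum_{k\ne i}(\chi_{k,t}+r_{k,t})$ is fixed, the $t$-th summand of $u_i$ becomes $U_i(\chi_{i,t}+r_{i,t})-\tfrac{1}{A_t}(\chi_{i,t}+r_{i,t})(\chi_{i,t}+r_{i,t}+S_{-i,t})$. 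The first term is concave in $\chi_{i,t}$ (the quadratic branch has $U_i''=-\alpha_{i,t}<0$ and is continued by a constant with matching first derivative), and the second is $-1/A_t$ times a convex quadratic, hence concave. Thus $u_i(\cdot,\bm{\chi}_{-i})$ is concave and Equ.~(\ref{equ:max_u}) maximizes a concave objective over a convex compact set; moreover $u_i$ is jointly continuous.

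For the second stage, the strategy spaces are nonempty, convex and compact, the payoffs $u_i$ are continuous, and each $u_i$ is concave in the player's own variable, so by the Debreu--Glicksberg--Fan theorem (equivalently Rosen's theorem for concave games) a pure-strategy Nash equilibrium exists. For uniqueness I would exploit the structure of the payout term by introducing the candidate potential
\begin{equation*}
\begin{aligned}
\Phi(\bm{\chi})={}&\sum_{i\in\mathcal{N}}\sum_{t\in\mathcal{T}}U_i(\chi_{i,t}+r_{i,t})\\
&-\sum_{t\in\mathcal{T}}\frac{1}{2A_t}\Bigl[\Bigl(\sum_{i\in\mathcal{N}}(\chi_{i,t}+r_{i,t})\Bigr)^{2}+\sum_{i\in\mathcal{N}}(\chi_{i,t}+r_{i,t})^{2}\Bigr].
\end{aligned}
\end{equation*}
A direct differentiation gives $\partial\Phi/\partial\chi_{i,t}=\partial u_i/\partial\chi_{i,t}$ for every $i$ and $t$, so the customer game is an exact potential game with potential $\Phi$. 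Per time slot the Hessian of the bracketed quadratic is $\tfrac{1}{A_t}(\mathbf{1}\mathbf{1}^{\top}+I)\succ 0$, and the $U_i$ terms are concave, so $\Phi$ is strictly concave; hence it has a unique maximizer over the convex compact feasible set. Since a coordinate-wise maximizer of a strictly concave function over a product domain is its global maximizer, and the equilibria of a potential game are exactly its coordinate-wise maximizers, this unique point is the unique Nash equilibrium.

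An equivalent route, which I would mention as a cross-check, writes the equilibrium as the solution of the variational inequality $\langle g(\bm{\chi}^{*}),\bm{\chi}-\bm{\chi}^{*}\rangle\le 0$ on $\prod_i K_i$, where $g=(\nabla_{\bm{\chi}_1}u_1,\dots,\nabla_{\bm{\chi}_N}u_N)$ is the pseudo-gradient, and checks that $-g$ is strictly monotone: with $d_{i,t}:=\chi_{i,t}-\chi_{i,t}'$, the slot-$t$ contribution to $\langle g(\bm{\chi})-g(\bm{\chi}'),\bm{\chi}-\bm{\chi}'\rangle$ equals $\sum_{i}\bigl(U_i'(\chi_{i,t}+r_{i,t})-U_i'(\chi_{i,t}'+r_{i,t})\bigr)d_{i,t}-\tfrac{1}{A_t}\bigl[(\sum_i d_{i,t})^{2}+\sum_i d_{i,t}^{2}\bigr]$, which is $\le 0$ by concavity of $U_i$ and strictly negative unless all $d_{i,t}=0$; a strictly monotone VI has at most one solution.

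The main obstacle is the uniqueness stage: concavity of each $u_i$ in its own variable is not by itself enough to force a unique equilibrium in an $n$-person game, so the argument hinges on recognizing that the aggregate-load term in $p_e(\bm{\lambda}_t,L_t)$ yields an exact, strictly concave potential (equivalently, a strictly monotone pseudo-gradient). A secondary technicality is that $U_i$ is only $C^{1}$ at the kink $x=w_{i,t}/\alpha_{i,t}$, so the uniqueness step should be phrased through monotonicity of $U_i'$ (or subgradients / the variational inequality) rather than through second derivatives.
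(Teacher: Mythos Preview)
Your argument is correct and, for the uniqueness part, it takes a genuinely different and more complete route than the paper. The paper proceeds exactly as you do for convexity and existence: it expands the payout term, checks that it is convex in $\bm{\chi}_i$ (via a Hessian computation), combines this with the concavity of $U_i$ to conclude that $u_i(\cdot,\bm{\chi}_{-i})$ is concave on a convex compact simplex, and then invokes Rosen's existence theorem. For uniqueness, however, the paper simply asserts that the game is a ``strictly concave $N$-person game'' and cites Rosen's Theorem~3 without verifying the diagonal strict concavity condition that Rosen's result actually requires. Your exact-potential construction (equivalently, your direct computation that the pseudo-gradient $-g$ is strictly monotone, with slot-$t$ quadratic form $\tfrac{1}{A_t}\bigl[(\sum_i d_{i,t})^{2}+\sum_i d_{i,t}^{2}\bigr]$) is precisely what fills this gap: it supplies the strict monotonicity that makes Rosen's uniqueness theorem applicable, and at the same time gives a self-contained proof via strict concavity of $\Phi$ on the product $\prod_i K_i$. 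Your remark that $U_i$ is only $C^{1}$ at the kink and that the argument should be phrased through monotonicity of $U_i'$ rather than second derivatives is also a point the paper does not address.
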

\begin{proof}
From the above discussion it follows that the objective function in Equ. (\ref{equ:max_u}) is equal to
\begin{equation}
\label{equ:max_u1}
	\begin{aligned}
		\sum_{t \in \mathcal{T}} & U_{i}\left(\chi_{i,t}+r_{i,t}\right)\\
		&-\sum_{t \in \mathcal{T}} \Bigg(\frac{\left(\chi_{i,t}+r_{i,t}\right)^{2} +\sum_{j \in \mathcal{N},j \neq i}\left(\chi_{j,t}+r_{j,t}\right)}{\sum_{r \in \mathcal{M}} \lambda_{r, t}}\Bigg).
	\end{aligned}
\end{equation}
Let $k=1/(\sum_{r \in \mathcal{M}} \lambda_{r, t}), k> 0 $. For simplicity, we denote the right part of Equ. (\ref{equ:max_u1}) as follows:
\begin{equation*}
\label{equ:hk}
	h_{i}(\bm{\chi}_{i},\bm{\chi}_{-i})=\sum_{t \in \mathcal{T}} k\bigg(\left(\chi_{i,t}+r_{i,t}\right)^{2} +\sum_{j \in \mathcal{N},j \neq i}\left(\chi_{j,t}+r_{j,t}\bigg)\right).
\end{equation*}
We have
\begin{equation}
\begin{aligned} 
	\nabla_{\bm{\chi}_{i}} h_{i}&(\bm{\chi}_{i},\bm{\chi}_{-i}) 
=\left[\frac{\partial h_{i}(\bm{\chi}_{i},\bm{\chi}_{-i})}{\partial \chi_{i,t}}\right]_{t=1}^{T} \\ 
&=\left(\frac{\partial h_{i}(\bm{\chi}_{i},\bm{\chi}_{-i})}{\partial \chi_{i,1}}, \cdots, \frac{\partial h_{i}(\bm{\chi}_{i},\bm{\chi}_{-i})}{\partial \chi_{i,T}}\right)\\
&=2k\left[\left(\chi_{i,t}+r_{i,t}\right)+\sum_{j \in \mathcal{N},j \neq i} \left(\chi_{j,t}+r_{j,t}\right)\right]_{t=1}^{T}\\
\end{aligned}
\end{equation}
and the Hessian matrix is as follows:
\begin{equation}
\nabla^{2}_{\bm{\chi}_{i}} h_{i}(\bm{\chi}_{i},\bm{\chi}_{-i})=\left(\begin{array}{cccc}{2k} & {2k} & {\cdots} & {2k} \\ {2k} & {2k} & {\cdots} & {2k} \\ {\vdots} & {\vdots} & {\ddots} & {\vdots} \\ {2k} & {2k} & {\cdots} & {2k}\end{array}\right)_{N \times T}.
\end{equation}
This further leads to
\begin{equation*}
\begin{array}{c}
	X^{\mathrm{T}} \nabla^{2}_{\bm{\chi}_{i}} h_{i}(\bm{\chi}_{i},\bm{\chi}_{-i}) X=2k \left(X_{1}+X_{2}+\cdots+X_{N \times T}\right)^{2} \geq 0, \\\\
	\quad \forall X=\left(X_{1}, X_{2}, \cdots, X_{N \times T}\right)^{\mathrm{T}}.
\end{array}
\end{equation*}

Therefore, the Hessian matrix of $h_{i}(\bm{\chi}_{i},\bm{\chi}_{-i})$ is positive semi-definite and $h_{i}(\bm{\chi}_{i},\bm{\chi}_{-i})$ is convex.
Moreover, since the utility function $U_{i}(.)$ is continuous and strictly concave in the strategy space, the payoff function Equ. (\ref{equ:u}) of each TE $i \enspace (\forall i \in \mathcal{N})$ is strictly concave. So the objective function in Equ. (\ref{equ:max_u1}) is concave. Hence, Equ. (\ref{equ:max_u}) is a convex optimization problem.
Meanwhile, since the constraints of Equ. (\ref{equ:max_u}) are inequalities or linear equations, the feasible domain is convex. Thus, the TEs' optimization problem is a convex programming problem.
Hence, the TE's game is a strictly concave $N$-person game. Since the demand profile sets are closed, bounded and convex, the existence of Nash equilibrium can be proved based on~\cite[Theorem 1]{Rosen1965Existence}. Analogously to~\cite[Theorem 3]{Rosen1965Existence}, for a concave $N$-person game, there exists a unique equilibrium solution. Therefore, the theorem is proved.
\end{proof}

In the Nash equilibrium, for any given ESs' bids, no TE can increase its payoff by a unilateral change on its strategy. In the next section, a task outsourcing algorithm is developed to determine the point for both ES and TE's games.

\section{Distributed Task Outsourcing Algorithm}
\label{sec:Algorithm}

% 变量加入迭代次数的含义
In this section, we propose a distributed task outsourcing algorithm to demonstrate the interaction among TEs and ESs. Our method is referred as \textbf{\emph{DTOA}}. Let $g$ be the iteration number. 

Notations:

Let ${\chi}_{i,t}^{g}$ denote the demand profiles of TE $i$ in iteration $g$ at time slot $t$ and vector $\bm{\chi}_{i}^{g}$ denote the demand profile of TE $i$ for all time slots. The matrix $\bm{\chi}=\left(\bm{\chi}_{1}, \ldots,\bm{\chi}_{t}, \ldots,\bm{\chi}_{T}\right)^{T}$ denotes the demand profiles of all TEs in iteration $g$ for all time slots. Let matrix $\bm{\lambda}=\left(\bm{\lambda}_{1},\ldots,\bm{\lambda}_{t},\ldots,\bm{\lambda}_{T}\right)^{T}$ denote the bids of all ESs for all time slots. $L_{t}^{g}$ denotes the aggregate loads in iteration $g$ at time slot $t$. $p_{e}^{g}\left(\bm{\lambda}_{t}^{g}, L_{t}^{g}\right)$ denotes the computing service price in iteration $g$ at time slot $t$.

% 描述交互过程、引入算法
As shown in Fig. \ref{fig:Interaction}, the interaction between ESs and TEs can be modeled as a two-stage game. They interact with each other to determine optimal bids and demand profiles. The detailed process is depicted in Algorithm 1 and 2.
\begin{itemize}
	\item The ESs try to maximize their profits by determining their own bids according to optimization function Equ. (\ref{equ:max1}).
	\item The TEs will then adjust their demand profiles following optimization function Equ. (\ref{equ:max_u}).
\end{itemize}
\begin{figure}[htbp]
	\centering
	\includegraphics[width=1.0\linewidth]{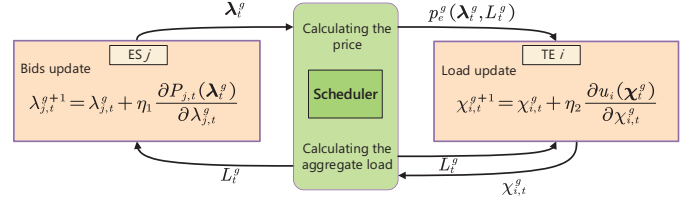}
	\caption{Interactions between the ESs, TEs and scheduler.} \label{fig:Interaction}
\end{figure}
\begin{algorithm}[htbp]
{
	{
	\caption{TE's game}
	\label{alg:u}
	\begin{algorithmic}[1]
		\STATE \textit{Initialization}: $g=0$.
		\STATE Randomly initialize TEs' demand profiles.
		\REPEAT
			\FOR{(each time slot $t \in \mathcal{T}$)}
				\STATE Receive $L_{t}^{g}$ from the scheduler.
				\STATE Update the bid $\bm{\lambda}_{t}^{g}$ by Algorithm \ref{alg:p}.
				\STATE Receive the updated $p_{e}^{g}\left(\bm{\lambda}_{t}^{g}, L_{t}^{g}\right)$ from the scheduler.
				\FOR{(each TE $i \in \mathcal{N}$)}
					\STATE $\chi_{i,t}^{g+1}=\left[\chi_{i,t}^{g}+\eta_{2} \frac{\partial u_{i}\left(\bm{\chi}_{t}^{g}\right)}{\partial \chi_{i,t}^{g}} \right]^{+}$.
				\ENDFOR
			\ENDFOR
			\STATE $g:=g+1$.			
		\UNTIL{$\left\| \bm{\chi}^{g} -\bm{\chi}^{g-1} \right\| < \epsilon$.}
	\end{algorithmic}
}
\par}
\end{algorithm}
\begin{algorithm}[htbp]
{
	{
	\renewcommand\baselinestretch{1.2}\selectfont %控制行距
	\caption{ES's game}
	\label{alg:p}
	\begin{algorithmic}[1]
		\REQUIRE Total load at time slot $t$: $L_{t}, t \in \mathcal{T}$ and $t$.
		\ENSURE Bids of all ESs at time slot t: $\bm{\lambda}_{t}$.
		\STATE \textit{Initialization}: Randomly initialize ESs' bid profiles for the first time.
		\STATE Receive $L_{t}$ from the scheduler.
		\FOR{(each ES $j \in \mathcal{M}$)}
			\STATE $\lambda_{j,t}^{g+1}=\left[\lambda_{j,t}^{g}+\eta_{1} \frac{\partial P_{j, t}\left(\bm{\lambda}_{t}^{g}\right)}{\partial \lambda_{j,t}^{g}}\right]^{+}$.
		\ENDFOR
		\RETURN $\bm{\lambda}_{t}$.
	\end{algorithmic}
	}
\par}
\end{algorithm}

% 解释算法
The \textbf{\emph{DTOA}} can be described as follows. Firstly, the scheduler randomly initializes the TEs' demand profiles and ESs' bid profiles. Secondly, the TE $i \enspace (i \in \mathcal{N})$ sends the shiftable demand profile $\bm{\chi}_{i}^{g}$ to the broker and receives $L_{t}^{g}$ from it. Then, the ESs will receive a signal to update their bids based on the following iterative equation:
\begin{equation}
\label{equ:TEiteration} 
	\lambda_{j,t}^{g+1}=\left[\lambda_{j,t}^{g}+\eta_{1} \frac{\partial P_{j, t}\left(\bm{\lambda}_{t}^{g}\right)}{\partial \lambda_{j,t}^{g}}\right]^{+}, \quad \forall t \in \mathcal{T}.
\end{equation}
where $\eta_{1}$ is the step size.
$[\cdot]^{+}$ in Equ. (\ref{equ:TEiteration}) is the projection onto the feasible set defined by the constraints $\lambda_{j, t} \geq 0$.
It is noticed that the ES $j \enspace (j \in \mathcal{M})$ does not know other ESs' bids. In this aspect, the \textbf{\emph{DTOA}} can also preserve the privacy of participants. Thirdly, the computing service price $p_{e}^{g}\left(\bm{\lambda}_{t}^{g}, L_{t}^{g}\right)$ is updated by the scheduler according to Equ. (\ref{equ:p2}). The TEs will further be informed to update their shiftable demand profiles using a gradient boosting method:
\begin{equation}
\label{equ:ESiteration}
	\chi_{i,t}^{g+1}=\left[ \chi_{i,t}^{g}+\eta_{2} \frac{\partial u_{i}\left(\bm{\chi}_{t}^{g}\right)}{\partial \chi_{i,t}^{g}}\right]^{+}, \quad \forall t \in \mathcal{T}.
\end{equation}
$\eta_{2}$ is the step size.
$[\cdot]^{+}$ in Equ. (\ref{equ:ESiteration}) is the projection onto the feasible set defined by the constraints $\sum_{t \in \mathcal{T}} \chi_{i, t}=Q_{i}^{\text {total}}$ and $\chi_{i, t} \geq 0$.
It is worth remarking that Equ. (\ref{equ:max_u}) needs the updated price $p_{e}^{g}\left(\bm{\lambda}_{t}^{g}, L_{t}^{g}\right)$ and $L_{t}^{g}$ to determine $\left({\partial u_{i}\left(\bm{\chi}_{t}^{g}\right)}/{\partial \chi_{i,t}^{g}}\right)$. Besides, since $\left({\partial u_{i}\left(\bm{\chi}_{t}^{g}\right)}/{\partial \chi_{i,t}^{g}}\right)$ only depends on its own demand profile and the price and there is no need to know the demand profile of other TEs. Thus, this fact protects the privacy of the TEs. Finally, the stopping criterion of the algorithm is checked by the scheduler. If the relative change of shiftable demand profiles during two consecutive iterations is lower than the value $\epsilon$, the iterations can be stopped. Otherwise, the TEs will continue computing their demand profiles based on the newly updated price and bids.

The optimization problems Equ. (\ref{equ:max1}) and Equ. (\ref{equ:max_u}) will converge to the optimal point by the projected gradient method. In the end, the algorithm will converge.
In the equilibrium, the ESs are playing their equilibrium strategies according to TEs' tasks strategies, and the TEs also choose their equilibrium strategies based on ESs' submitted bids.
Thus when the Nash equilibrium is reached, none of the ESs and TEs improve their profit.
\section{Performance Evaluation}
\label{sec:Performance Evaluation}

\subsection{Simulation experiment} % //TODO:修改实验

% 实验参数
In this section, we present a simulation experiment to validate our theoretical analysis.
We assume a MEC resource exchange market has 10 ESs and 1000 TEs, which are willing to participate in the \textbf{\emph{DTOA}} scheme.
There are 24 time slots.
The relevant parameters of the model are shown in Table \ref{table:02}.
The base demand $r_{i,t}$ of each TE at each time slot is randomly selected from [9660, 37065].
The shiftable demand refers to real-time, non-shiftable tasks, which reflects the changes in the total demand of all TEs at different time slots.
Since most loads are running in real-time pattern, it is plausible to assume relatively low shiftable loads for TEs.
The shiftable demand ${\chi}_{i,t}$ of each TE is assumed to be chosen randomly from 10\% to 12\% of its base demand.
And the total demand is the sum of the base demand and shiftable demand.
Considering the generation cost function $c\left(f_{j, t}\right)=a_{j,2} f_{j, t}^{2}+a_{j,1} f_{j, t}+a_{j,0}$ for each ES $j\enspace (j \in \mathcal{M})$, we assume that $a_{j,2}$ is randomly generated in the interval [4.76e-6, 4.76e-5], $a_{j,1}=0.001$ and $a_{j,0}=0.001$.
The initial values of $\eta_{1}$ and $\eta_{2}$ are set as 0.05 and 0.01 respectively.
In order to find the optimal solution, the step size of next iteration will be a little less than the previous one, namely $\eta_{1}$=$\eta_{1}$*0.985 and $\eta_{2}$=$\eta_{2}$*0.98.
The initial bids $\lambda_{j,t}$ of ESs are all set as 20000.
The $\alpha_{i,t}$ is set as 0.5 and $\omega_{i,t}$ is randomly selected from interval [0.8, 1.0].
Also, the $\epsilon$ is set equal to 0.3.

\begin{table}\centering
	\caption{System Parameters}
	{
	  \begin{tabular}{ll}
		\hline \textbf{System parameters} & \textbf{Value(Fixed)-[Varied range]} \\
		\hline 
		 Base demand $r_{i,t}$ & [9660, 37065] \\
		 Shiftable demand ${\chi}_{i,t}$ & [10\%,12\%]*Base demand \\
		 $a_{j,2}$ & [4.76e-6, 4.76e-5] \\
		 $a_{j,1}$ & (0.001) \\
		 $a_{j,0}$ & (0.001) \\
		 step size $\eta_{1}$ & (0.05), $\eta_{1}$=$\eta_{1}$*0.985 \\
		 step size $\eta_{2}$ & (0.001), $\eta_{2}$=$\eta_{2}$*0.98 \\
		 ES's bid $\lambda_{j,t}$ & (20000) \\
		 $\alpha_{i,t}$ & (0.5) \\		 
		 $\omega_{i,t}$ & [0.8,1.0] \\
		 $\epsilon$ & (0.3)\\
		\hline	
	  \end{tabular}
	}
  \label{table:02}
\end{table}

% On the customer side, the total base demand for each time slot is randomly selected from interval [35,95] and it is further randomly distributed to all TEs in that time slots. Likewise, the shiftable demand ${\chi}_{i,t}$ for each TE is randomly selected from interval [2,4]. The total demand is the sum of the base demand and shiftable demand. $w_{i, t}$ for each TE is randomly generated in the interval [4,10], and $\alpha_{i, t}$ for each TE is 0.2. Also, the $\epsilon$ is set equal to 0.001.

\begin{figure}[htbp]
	\centering
	\includegraphics[width=1.0\linewidth]{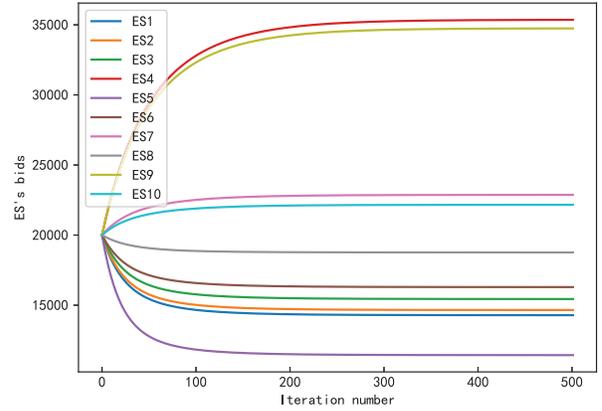}
	\caption{Convergence of ES 1-10's bids at time slot 5.} \label{fig:Provider_Bid}
\end{figure}
\begin{figure}[htbp]
	\centering
	\includegraphics[width=1.0\linewidth]{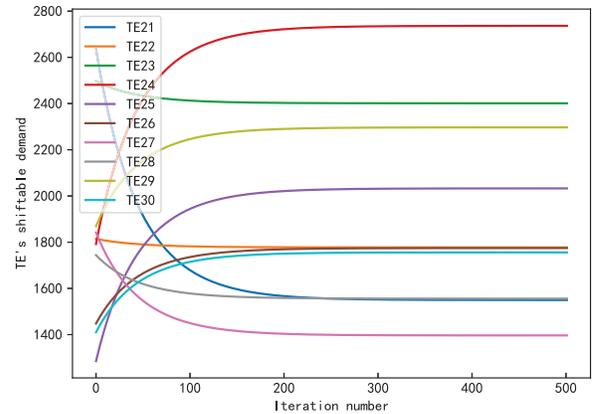}
	\caption{Convergence of shiftable loads for TE 21-30 at time slot 5.} \label{fig:User_Shift}
\end{figure}

\subsection{Algorithm Convergence}
The performance of our proposed \textbf{\emph{DTOA}} is evaluated in terms of its convergence.
Fig. \ref{fig:Provider_Bid} and Fig. \ref{fig:User_Shift} show the convergence of ESs' bids and TEs' shiftable loads at time slot 5.
From Fig. \ref{fig:User_Shift}, these ten TEs (TEs 21-30) are randomly selected from 1000 TEs.
The speed of convergence to the equilibrium point depends on the step sizes and the stopping criterion $\epsilon$.
As the number of iterations increases, the bids and the shiftable load demands start from the initial values and they gradually converge to stable values.
In our experiment, the algorithm converges after around 248 iterations.
Hence, the proposed \textbf{\emph{DTOA}} is efficient and verifies the theoretical proof presented above.

To demonstrate the computational complexity of the algorithm, we evaluate the running time of the algorithm for different number of TEs and ESs.
As shown in Fig. \ref{fig:RunningTime}, the running time of the algorithm increases linearly with the number of TEs $N$ and it is almost independent of $M$.
This is because that by increasing the number of TEs and ESs, the number of updates for TEs and ESs will increase proportional to $N$ and $M$, respectively.
The update process for TEs takes more time comparing with the updates for ESs since the TEs need to consider load shifting during $T$ time slots (the projected gradient), which make the update process more complex.
From Fig. \ref{fig:RunningTime}, the running time of the algorithm is acceptable even for large number of TEs.
So it can be concluded that the algorithm is efficient and can be implemented in scenarios with large number of TEs.
\begin{figure}[htbp]
	\centering
	\includegraphics[width=1.0\linewidth]{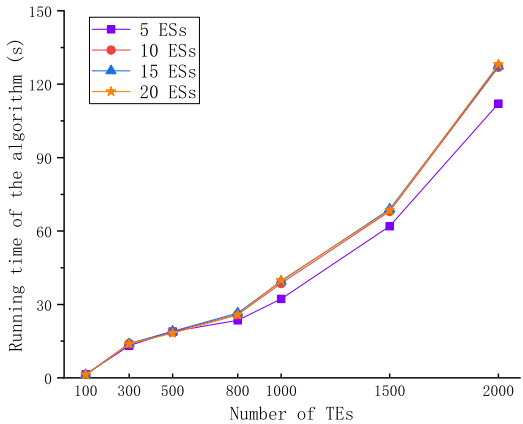}
	\caption{Running time of algorithm for different number of TEs and ESs}\label{fig:RunningTime}
\end{figure}

\subsection{Economic effects of algorithm}
% 用户的支出减少、效益增加
By participating in the \textbf{\emph{DTOA}} scheme, the payout (see Equ. (\ref{equ:payout})) represents TE's expenditure on purchasing computing resources.
Fig. \ref{fig:User_Payout} shows the daily total payout for TE 1 to TE 30 before algorithm and after algorithm.
Compared with before algorithm, the total payout of each TE after algorithm is reduced.
We can see that TEs can save around 5\% of they payout by participating in \textbf{\emph{DTOA}} scheme.
The vertical axis of Fig. \ref{fig:User_Payout} shows the payouts of TEs are so huge, and even a 5\% savings reduces a large expenditure.
Furthermore, as shown in Fig. \ref{fig:User_Utility}, the daily total payoff of each TE after using the algorithm has increased than before algorithm.
The payoff (see Equ. (\ref{equ:u})) is the utility of the calculation tasks minus the payout.
Although the green bar is only a little more than the red bar chart, the payoff has also increased a lot because its magnitude is large and arrives $10^{9}$.

\begin{figure*}[htbp]
	\centering
	\includegraphics[width=1.0\linewidth]{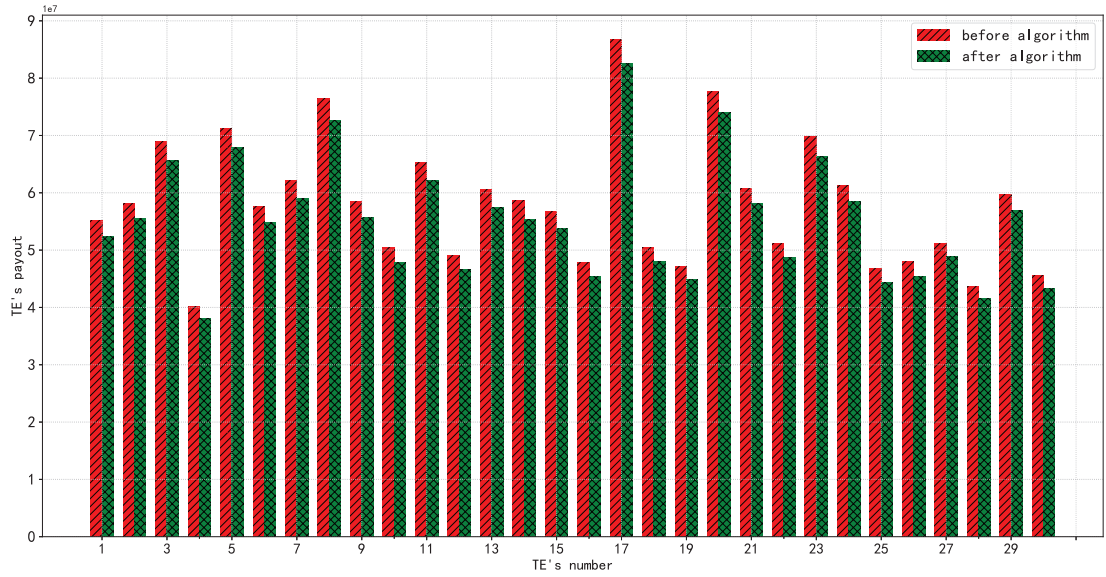}
	\caption{Daily total payout for TE 1 to TE 30.} \label{fig:User_Payout}
\end{figure*}
\begin{figure*}[htbp]
	\centering
	\includegraphics[width=1.0\linewidth]{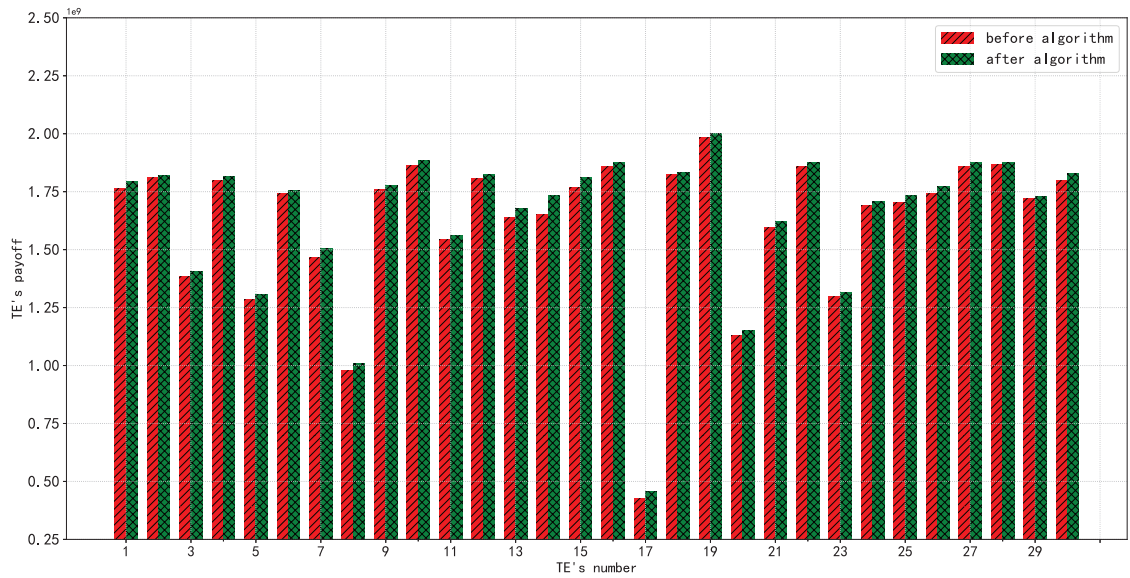}
	\caption{Daily total payoff for TE 1 to TE 30.} \label{fig:User_Utility}
\end{figure*}

% 提供商成本，和报价与总负载量有关
Fig. \ref{fig:P_Profit} displays the total profit of ESs 1-10 before and after algorithm.
The total profit of ES is the sum of the profit of all time slots.
From Fig. \ref{fig:P_Profit}, we can see that the total profit of each ES increases after applying \textbf{\emph{DTOA}} because the aggregate load profile becomes smoother; and hence, the ESs' generation cost decreases.
Besides, the suppliers aim to submit optimal bids that maximize their profits in each time slot.
The results of the algorithm are in line with expectations, which shows the supplier side's individual rationality of our proposed method.
\begin{figure}[htbp]
	\centering
	\includegraphics[width=1.0\linewidth]{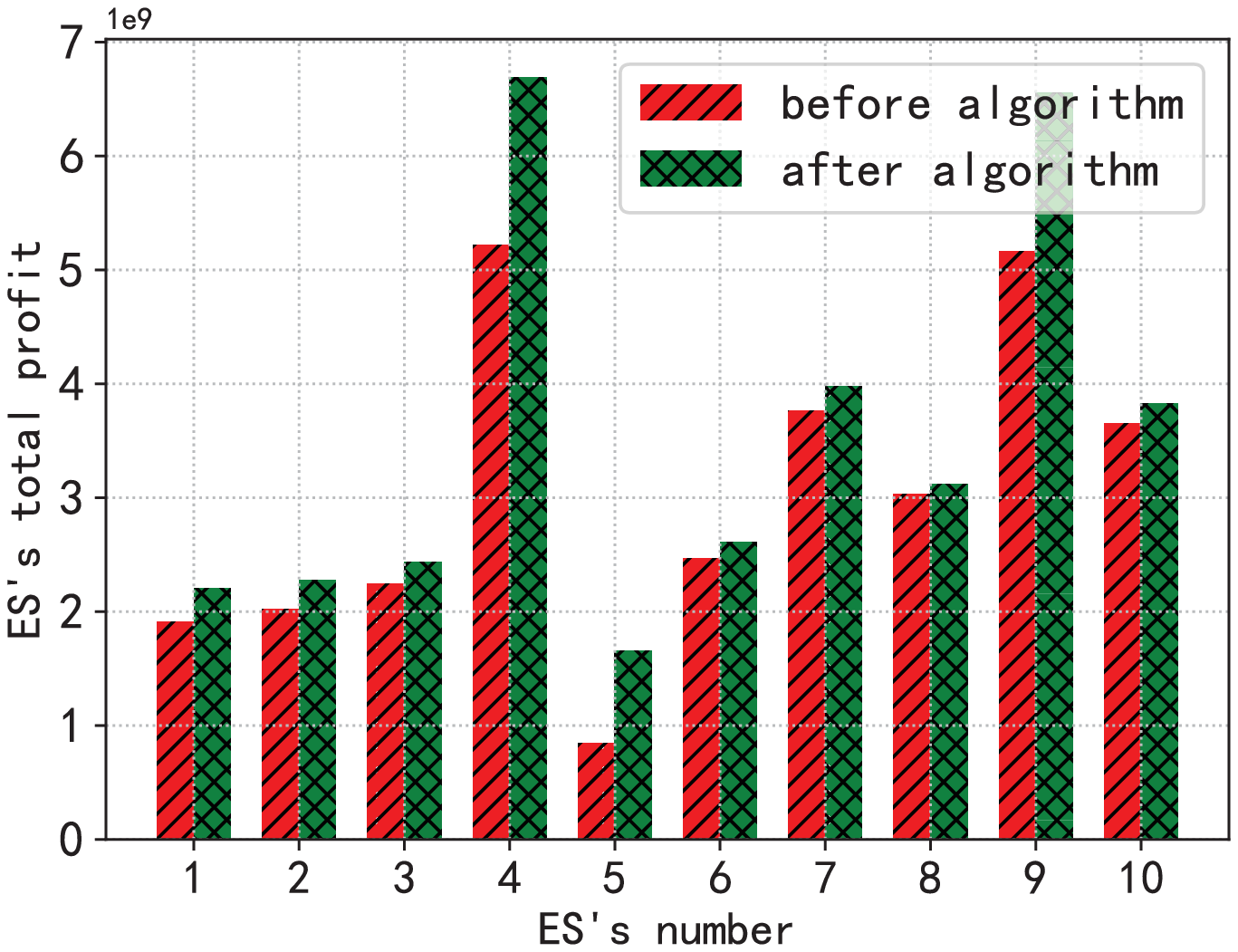}
	\caption{Daily total profit for ES 1 to ES 10.} \label{fig:P_Profit}
\end{figure}

\subsection{Peak-reducing effect of algorithm}

\begin{figure}[htbp]
	\centering
	\includegraphics[width=1.0\linewidth]{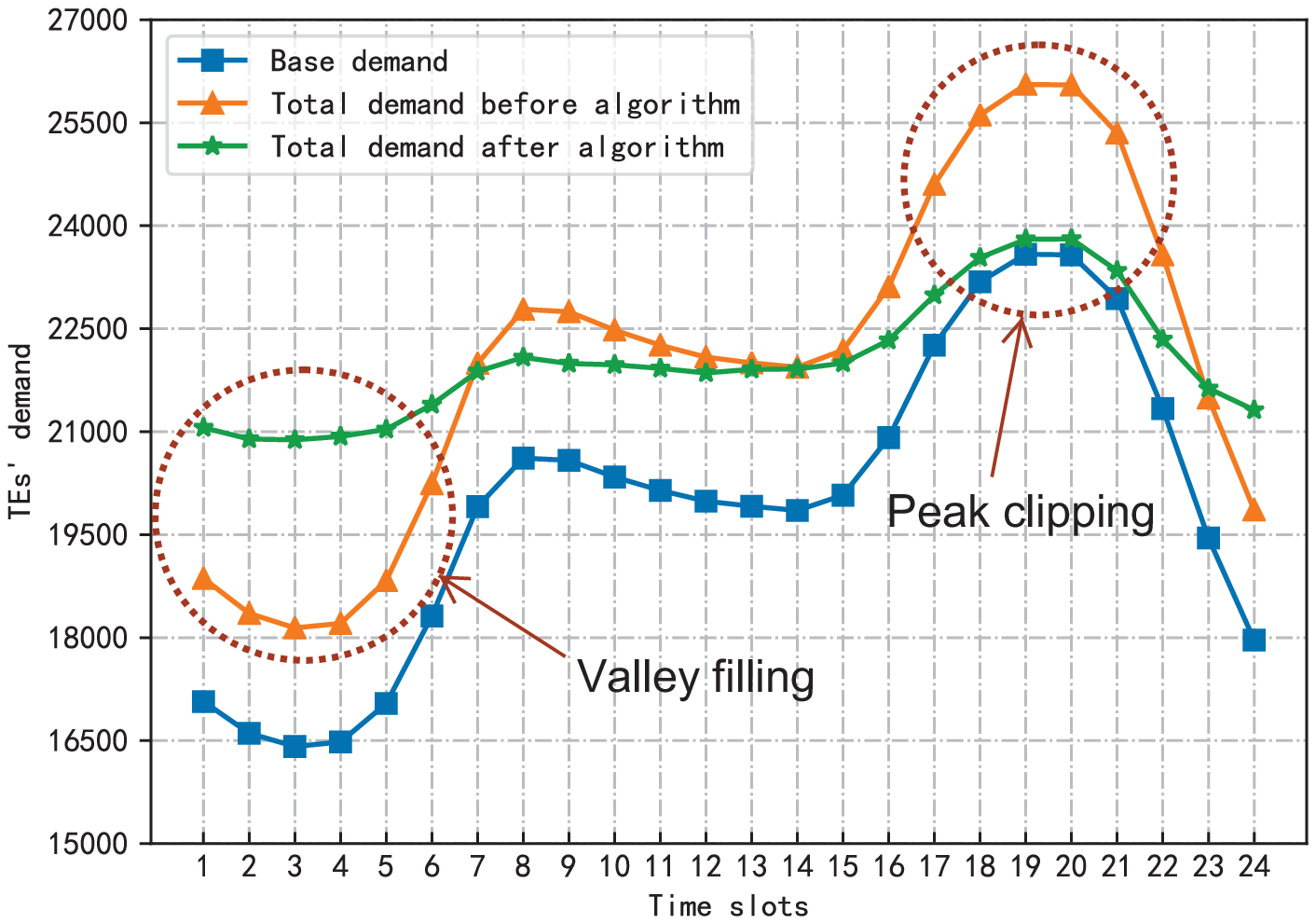}
	\caption{Base and total demand before and after algorithm. The peak shaving is achieved by using \textbf{\emph{DTOA}} (the dashed circle).} \label{fig:Demand}
\end{figure}

For simulations, the initial state of TEs' demand is assumed to be load profile before algorithm.
The aggregate load profile becomes smoother after the \textbf{\emph{DTOA}}.
The dashed circle shows the fluctuation of the demand including valley filling and peak clipping.
Normally, the peak load demand is 26200, while the peak load demand decreases to 23800 in the case of the \textbf{\emph{DTOA}}.
Therefore, the peak load demands are shifted from peak to off-peak time slots. 
Furthermore, the load demand for each TE is shifted to time slots with higher $w_{j,t}$, which brings a higher payoff to the TEs.
This demonstrates that the proposed \textbf{\emph{DTOA}} performs satisfactorily in reducing the peak load demand.

\begin{figure}[htbp]
	\centering
	\includegraphics[width=1.0\linewidth]{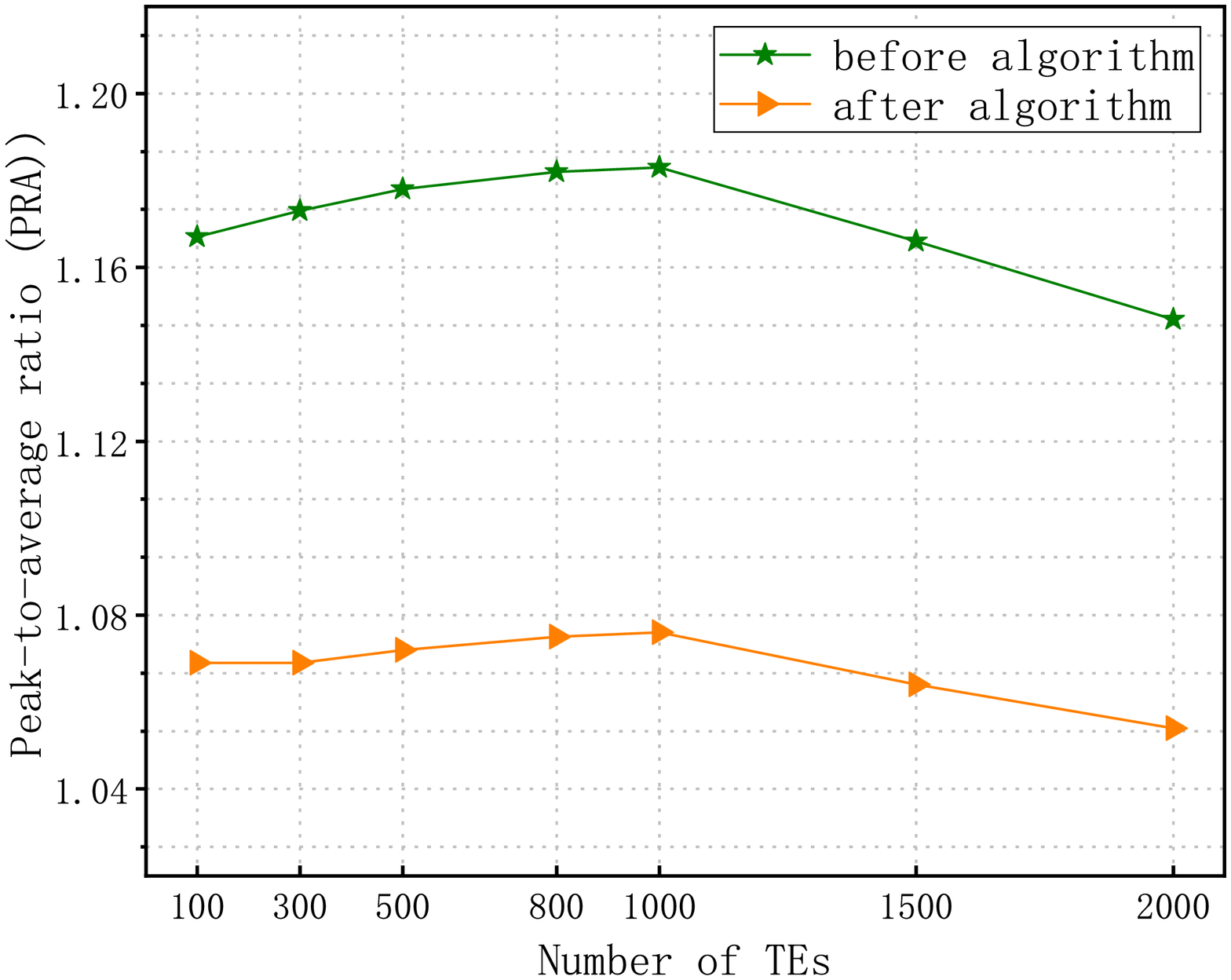}
	\caption{Peak-to-average ratio with and without task scheduling}\label{fig:PAR}
\end{figure}
Fig. \ref{fig:PAR} shows the PAR (peak-to-average ratio) index with and without task scheduling in 24 time slots and for different number of TEs.
Before algorithm, since there are high peak load and low average load, PAR index is high.
By applying \textbf{\emph{DTOA}}, the peak clipping and valley filling are achieved and the PAR index is low even for high number of TEs.
This demonstrates that the proposed task scheduling method can shift the shiftable loads from peak periods to off-peak periods effectively.

\subsection{Influence of parameters on iteration numbers} % 参数对迭代次数的影响

In this section, we discuss the influence of some parameters on the convergence speed of the algorithm.
The convergence speed of the algorithm is reflected in the round of algorithm updates (iteration numbers).
The smaller the iteration numbers, the faster the algorithm converges.
The bigger the iteration numbers, the slower the algorithm converges.

Fig. \ref{fig:theta} shows the influence of the parameter $\epsilon$ on iteration numbers.
$\epsilon$ is the stopping criterion of the algorithm.
As can be seen, the smaller the parameter $\epsilon$, the more iterations and the slower the algorithm convergence.
This fact shows that the stricter of the stopping criterion, the more times the algorithm needs to be updated.

In Fig. \ref{fig:r2}, the influence of the parameter $\eta_{2}$ on iteration numbers are shown.
Since the parameter $\eta_{2}$ will change in every round, as shown in Table \ref{table:02}, we set different initial value of parameter $\eta_{2}$ to show its impact on iteration numbers.
As can be seen, when other parameters are fixed, with the initial value of parameter $\eta_{2}$ becomes larger, the number of iterations also increases.
In summary, the speed of the algorithm convergence is related to the setting of some parameters.

\begin{figure}[htbp]
	\centering
	\includegraphics[width=1.0\linewidth]{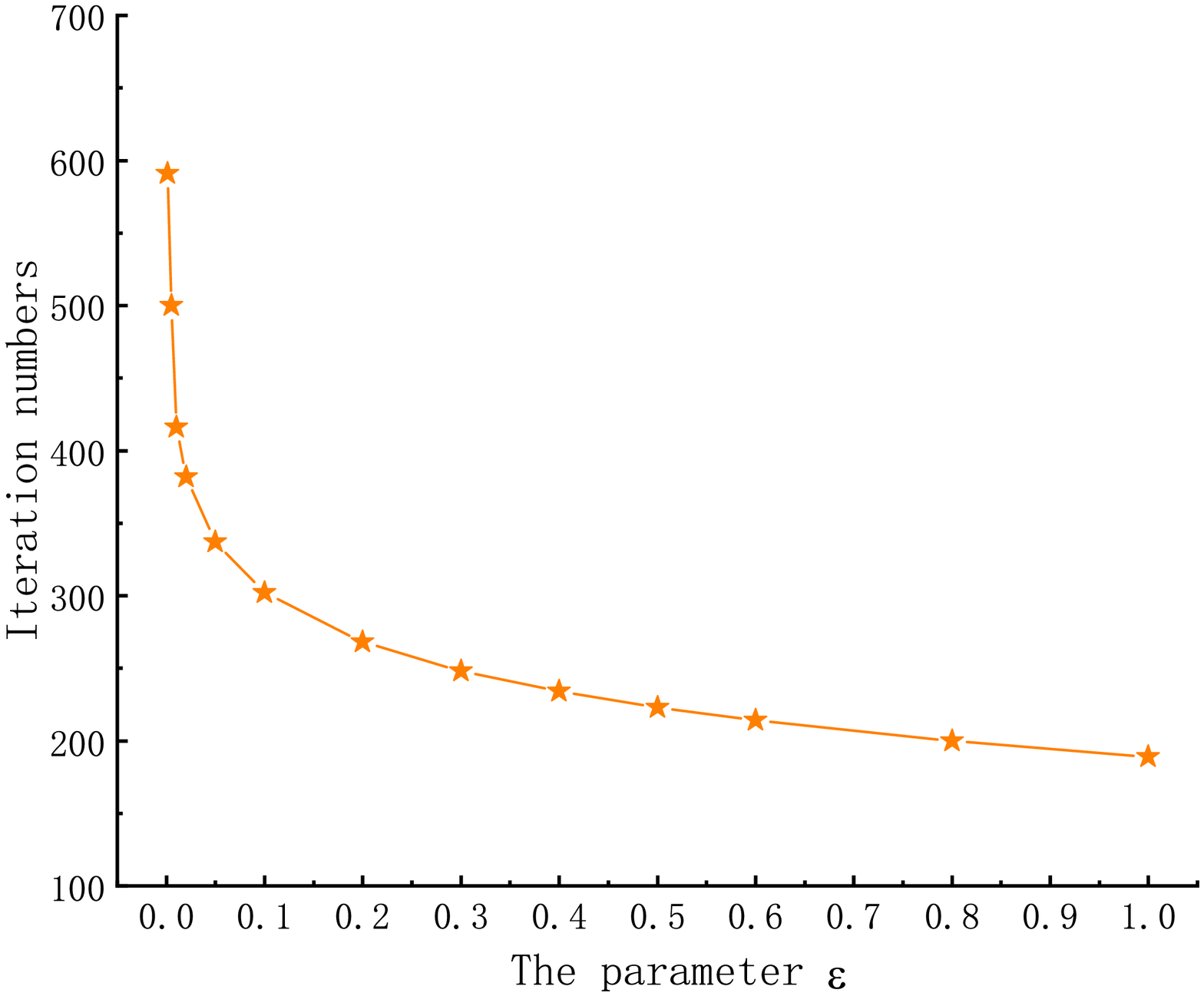}
	\caption{The influence of the parameter $\epsilon$ on iteration numbers.}\label{fig:theta}
\end{figure}
\begin{figure}[htbp]
	\centering
	\includegraphics[width=1.0\linewidth]{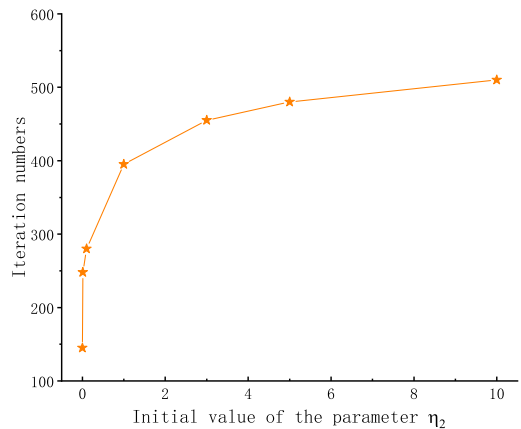}
	\caption{The influence of the parameter $\eta_{2}$ on iteration numbers.}\label{fig:r2}
\end{figure}
\section{Conclusion}
\label{sec:Conclusion}

In this paper, we analyze a practical resources transaction market in a MEC network, where multiple different ESs offering the optional computing service to TEs.
Since the resources of each ES are limited, the dynamic demand of its TEs may not be met during spikes in demands.
To overcome the bottleneck of resource limitation, task outsourcing has been regarded as an effective paradigm by accommodating as many on-demand tasks as possible.
We focus on the task outsourcing problem among multiple ESs and multiple TEs.
A bidding mechanism is utilized to describe the serving relationship between ESs and TEs, where the two parties are assigned as sellers and buyers.
The computing resources of ESs are regarded as commodities.

Simulations results demonstrate that the algorithm increases the ESs' profit and reduces the peak load by shifting the load demand to off-peak periods. Meanwhile, the TEs' payoff are also increased by participating in game process. As for future research, we will focus on the computing offloading of ESs in a three-tier IoT MEC networks.

% 致谢
\ifCLASSOPTIONcompsoc
  % The Computer Society usually uses the plural form
  \section*{Acknowledgments}
\else
  % regular IEEE prefers the singular form
  \section*{Acknowledgment}
\fi
This work is supported by National Natural Science Foundation of China (No.61872129).
\ifCLASSOPTIONcaptionsoff
  \newpage
\fi

% 参考文献
\bibliography{reference}{}
\bibliographystyle{IEEEtran}
% 照片和简介
\begin{IEEEbiography}[{\includegraphics[width=2.0in,height=1.25in,clip,keepaspectratio]{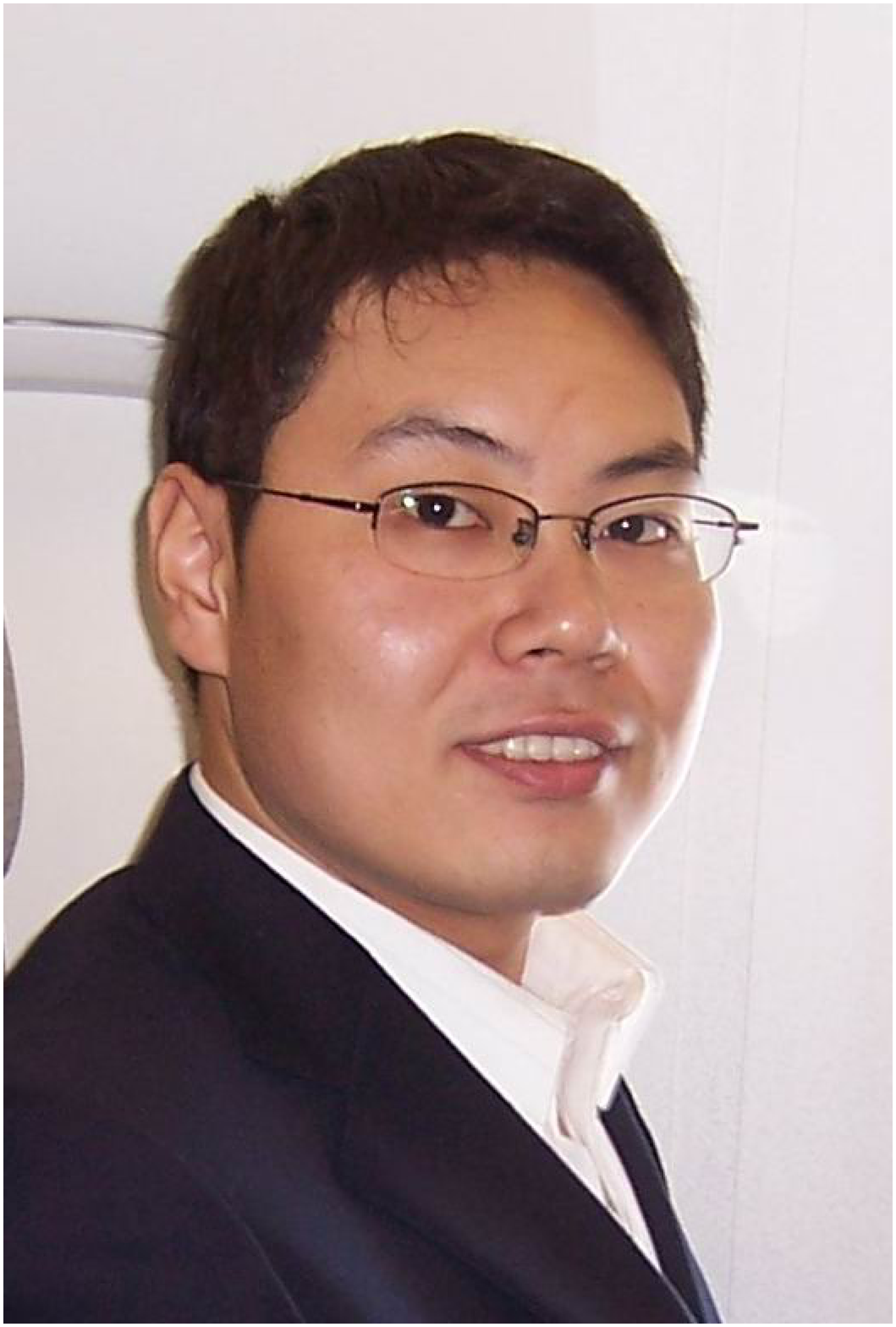}}]{Zheng Xiao}
	received the Ph.D. in Computer Science from Fudan University, China, in 2009, and B.S. in Communication Engineering from Hunan University in 2003. He is currently an associate professor in College of Information Science and Engineering of Hunan University. His major research interests include distributed artificial intelligence, high performance computing, parallel and distributed systems, intelligent information processing and collaborative optimization. He has published over 30 journal articles and conference papers. He is currently an associate editor of IEEE Access. He is a member of IEEE and CCF.
\end{IEEEbiography}

\begin{IEEEbiography}[{\includegraphics[width=1in,height=1.25in,clip,keepaspectratio]{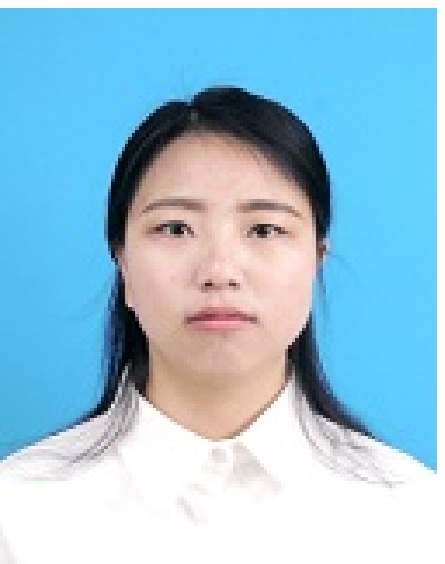}}]{Dan He}
	received the B.S. degree in computer science and technology from Jiangxi Normal University, Nanchang, China, in 2018. She is currently working toward the M.S. degree at the College of Information Science and Engineering, Hunan University, Changsha, China. Her research interests focus on high performance computing, modeling and resource scheduling in cloud computing systems, big data pricing and game theory.
\end{IEEEbiography}

\begin{IEEEbiography}[{\includegraphics[width=1in,height=1.25in,clip,keepaspectratio]{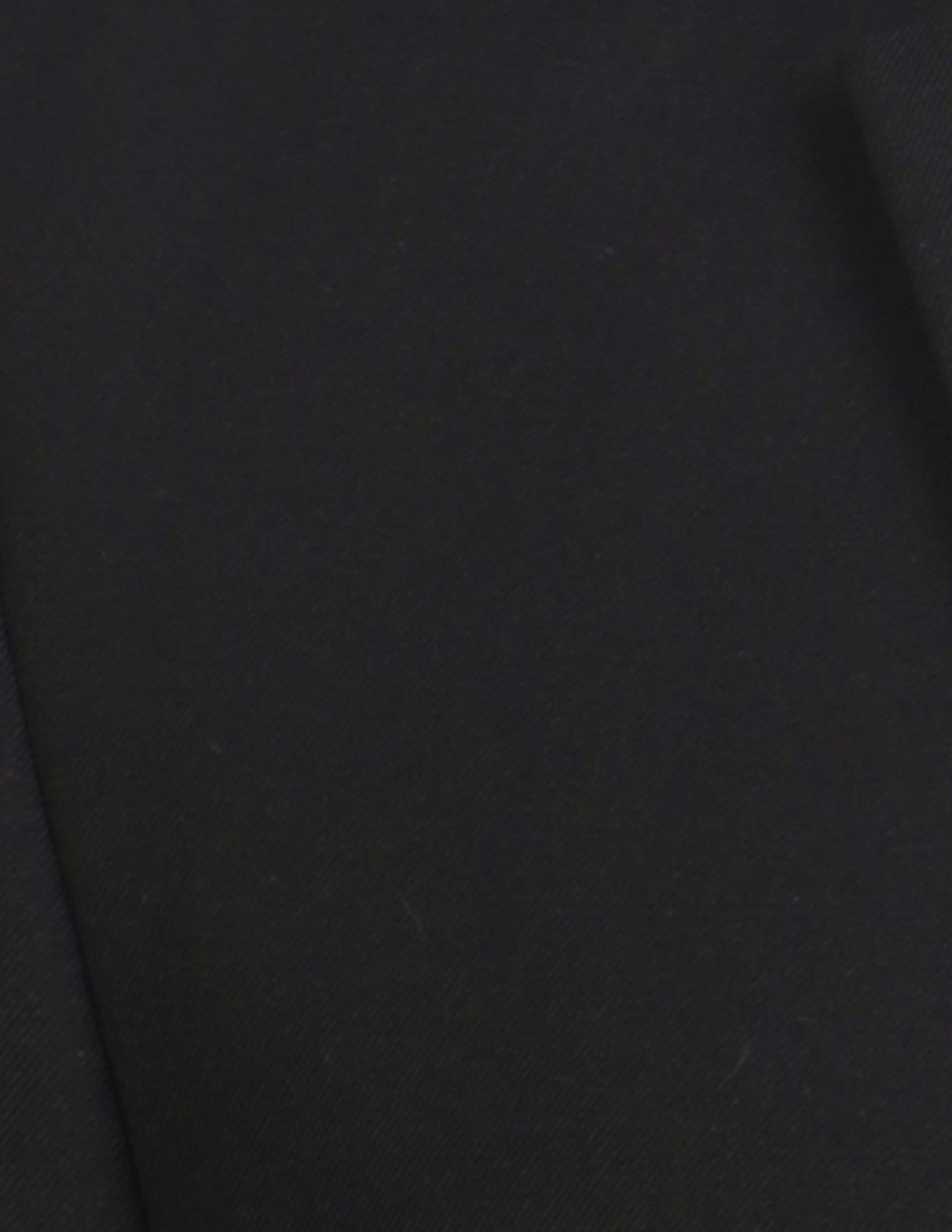}}]{Yu Chen}
    is currently working toward the M.S. degree at the Collage of Information Science and Engineering, Hunan University, China.
	His research interests focus on machine learning and natural language processing.
\end{IEEEbiography}

\begin{IEEEbiography}[{\includegraphics[width=1in,height=1.25in,clip,keepaspectratio]{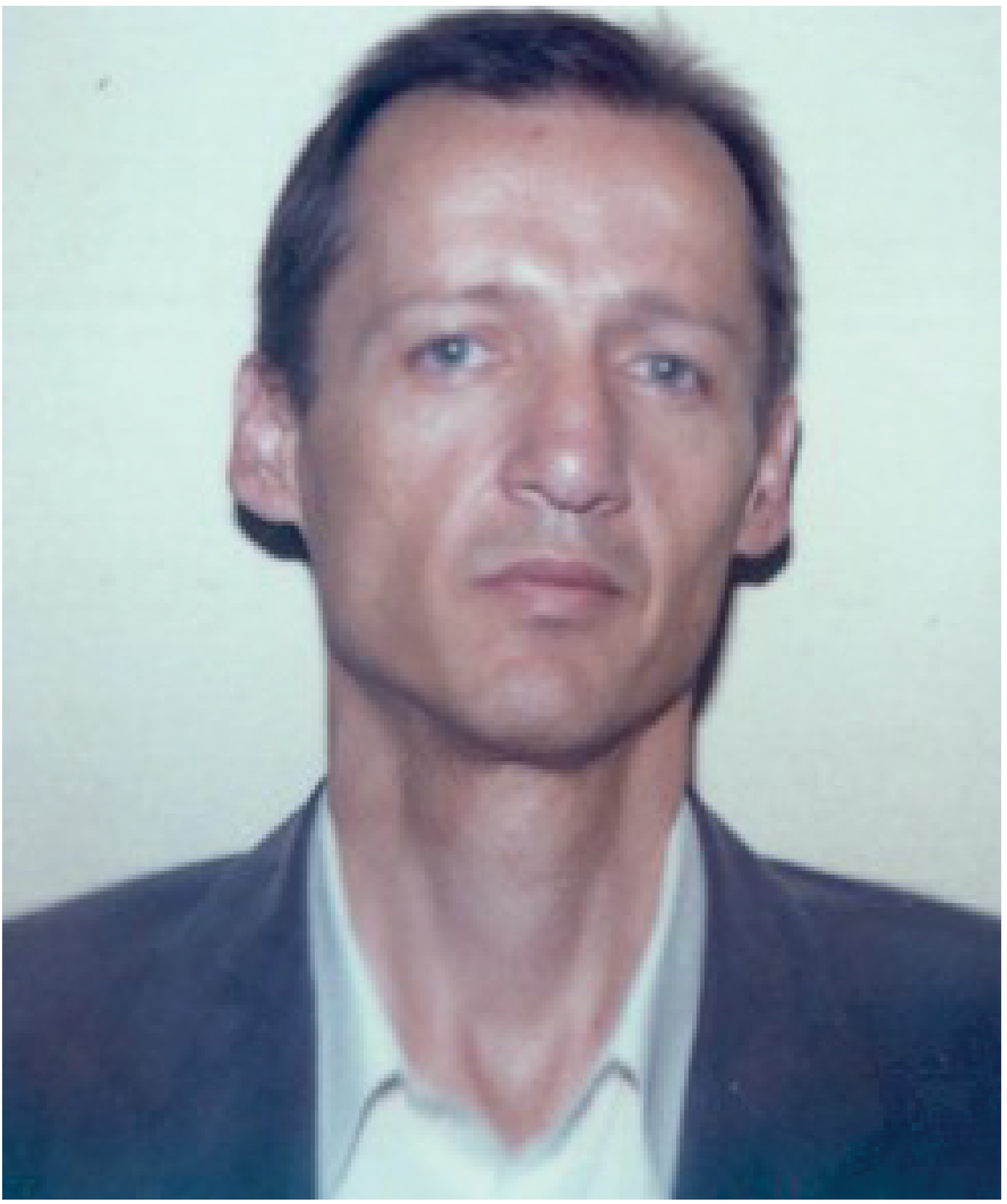}}]{Anthony Theodore Chronopoulos}
	obtained a Ph.D. in Computer Science from the University of Illinois at Urbana-Champaign in 1987. He is a full Professor at the Department of Computer Science, University of Texas, San Antonio, USA and a visiting professor, Department of Computer Engineering \& Informatics, University of Patras, Greece. He is the author of 83 journal and 73 peer-reviewed conference proceedings publications in the areas of Distributed and Parallel Computing, Grid and Cloud Computing, Scientific Computing, Computer  Networks, Computational Intelligence. He is a Fellow of the Institution of Engineering and Technology (FIET), ACM Senior member, \emph{IEEE} Senior member.
\end{IEEEbiography}

\begin{IEEEbiography}[{\includegraphics[width=1in,height=1.25in,clip,keepaspectratio]{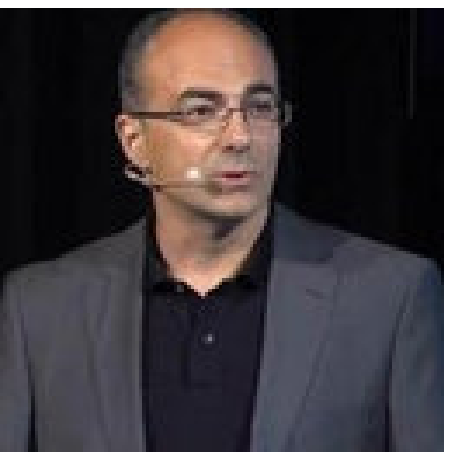}}]{Schahram Dustdar}
    is Full Professor of Computer Science (Informatics) with a focus on Internet Technologies heading the Distributed Systems Group at the TU Wien. He is Chairman of the Informatics Section of the Academia Europaea (since December 9, 2016).
    He is elevated to IEEE Fellow (since January 2016).
    From 2004-2010 he was Honorary Professor of Information Systems at the Department of Computing Science at the University of Groningen (RuG), The Netherlands.
    From December 2016 until January 2017 he was a Visiting Professor at the University of Sevilla, Spain and from January until June 2017 he was a Visiting Professor at UC Berkeley, USA.
    He is a member of the IEEE Conference Activities Committee (CAC) (since 2016), of the Section Committee of Informatics of the Academia Europaea (since 2015), a member of the Academia Europaea: The Academy of Europe, Informatics Section (since 2013).
    He is recipient of the ACM Distinguished Scientist award (2009) and the IBM Faculty Award (2012).
    He is an Associate Editor of IEEE Transactions on Services Computing, ACM Transactions on the Web, and ACM Transactions on Internet Technology and on the editorial board of IEEE Internet Computing. He is the Editor-in-Chief of Computing (an SCI-ranked journal of Springer).
\end{IEEEbiography}

\begin{IEEEbiography}[{\includegraphics[width=1.2in,height=1.25in,clip,keepaspectratio]{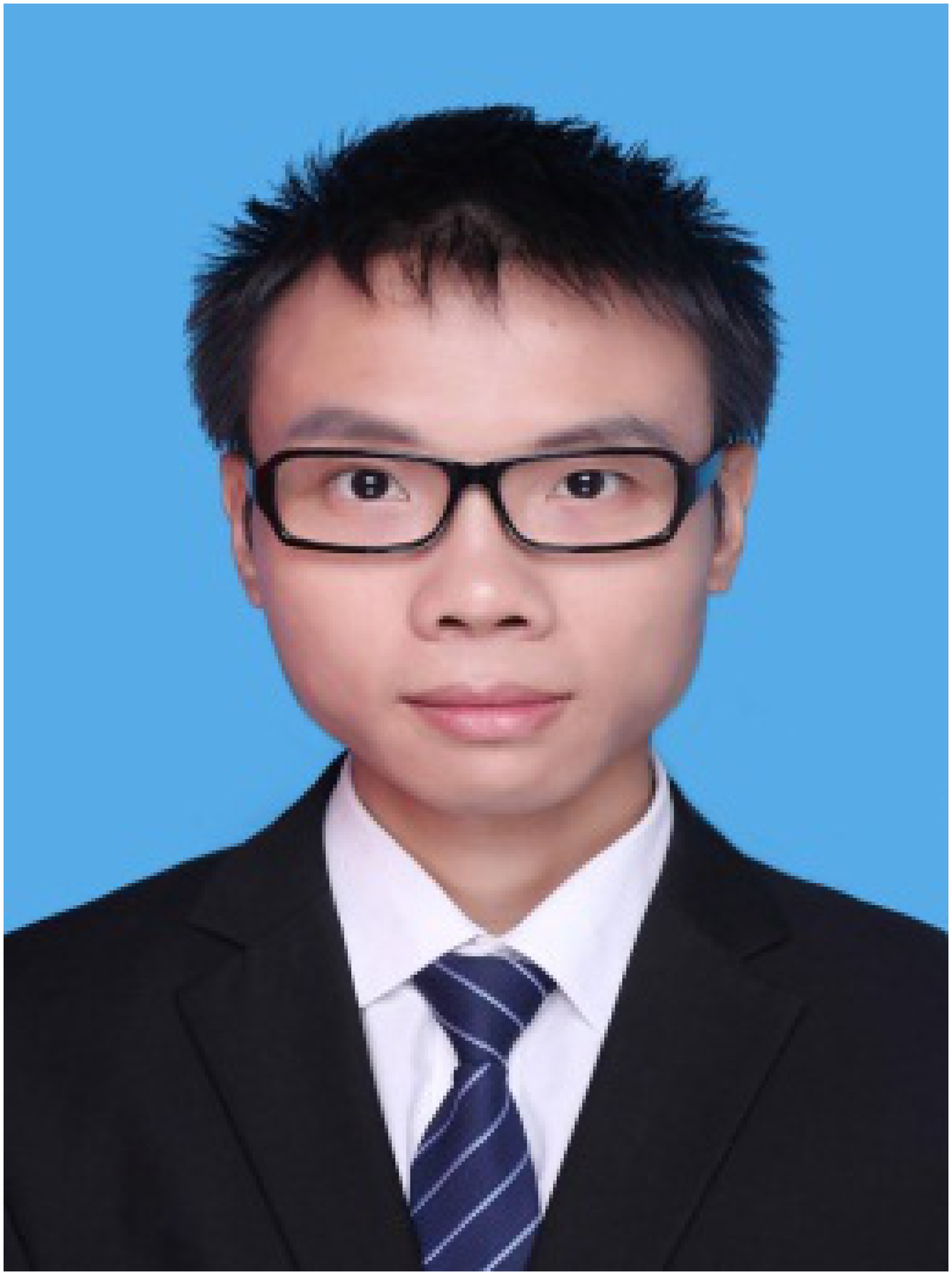}}]{Jiayi Du}
	received his Ph.D., M.S. and B.S. in computer science from Hunan University, China, in 2015, 2010 and 2004.
	He is currently an assistant professor in Central South University of Forest and Technology, China.
	His research interest includes modeling and scheduling for parallel and distributed computing systems, embedded system computing, cloud computing, parallel system reliability, and parallel algorithms.
\end{IEEEbiography}

\end{document}